\newtheorem{theorem}{Theorem}[section]
\newtheorem{proposition}[theorem]{Proposition}
\newtheorem{corollary}[theorem]{Corollary}
\newcommand{\bb}[1]{\pmb{\mathrm{#1}}}
\def\Nn{\mathcal{N}}
\def\Ss{\mathcal{S}}
\def\RR{\mathbb{R}}
\def\Sym{\mathrm{Sym}\,}
\def\vol{\mathrm{vol}}
\def\Vol{\mathrm{Vol}}
\def\Area{\mathrm{Area}}
\def\dis{\mathrm{dis}}
\def\TV{\mathrm{V_{\Ss}}}
\def\OO{\mathrm{O}}
\def\CC{\mathrm{C}}
\begin{document}

\title{Probably Approximately Symmetric: \\Fast Rigid Symmetry Detection with Global Guarantees}

\author{Simon Korman, Roee Litman, Shai Avidan and Alex Bronstein
        \\
        School of Electrical Engineering, Tel Aviv University, Israel
       }
\date{}

\maketitle

\begin{figure*}
\begin{center}
\vspace{-18pt}
    \addtolength{\tabcolsep}{12pt}
    \begin{tabular}{ccc}
        \includegraphics[width = 0.23\textwidth]{./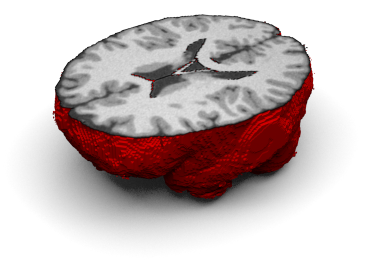} &
        \includegraphics[width = 0.15\textwidth]{./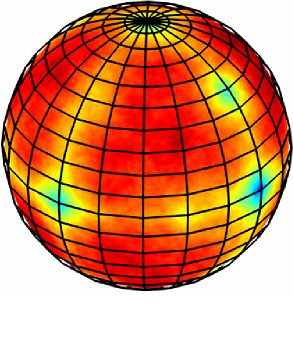} &
        \includegraphics[width = 0.26\textwidth]{./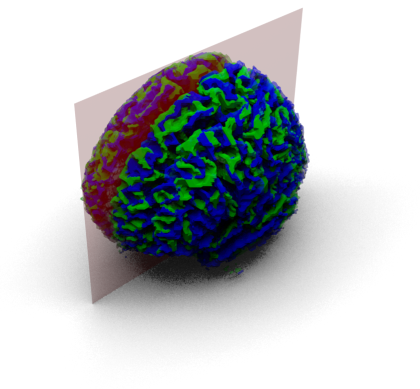}
    \end{tabular}
    \caption{\label{fig.MRI}\textbf{Symmetry in an MRI scalar volume.}
    \textbf{Left:} an illustration of the bottom half of a volumetric MRI of a brain. Our method is applicable to any volumetric image, not necessarily a binary one representing a solid 3D shape.
    \textbf{Center}: visualization of the reflection distortion (symmetry error) of the volume. Position on the sphere represents the normal direction of the reflection plane through the volume center, and color code represents the distortion (increasing from blue to red).
    \textbf{Right:} The detected reflection symmetry plane, along with the original image (green) and its reflected version (blue), visualized as iso-surfaces.}
    \vspace{12pt}
\end{center}
\end{figure*}

\begin{abstract}
We present a fast algorithm for global rigid symmetry detection with approximation guarantees. The algorithm is guaranteed to find the best approximate symmetry of a given shape, to within a user-specified threshold, with very high probability.
Our method uses a carefully designed sampling of the transformation space, where each transformation is efficiently evaluated using a sub-linear algorithm.
We prove that the density of the sampling depends on the total variation of the shape, allowing us to derive formal bounds on the algorithm's complexity and approximation quality.
We further investigate different volumetric shape representations (in the form of truncated distance transforms), and in such a way control the total variation of the shape and hence the sampling density and the runtime of the algorithm.
A comprehensive set of experiments assesses the proposed method, including an evaluation on the eight categories of the COSEG data-set. This is the first large-scale evaluation of any symmetry detection technique that we are aware of.
\end{abstract}

\section{Introduction}

Symmetry is ``a distinction without a difference'' in the words of the renowned physicist and Nobel laureate Frank Wilczek. Doubtlessly, symmetry along with the related
concepts of self-similarity and invariance is an all-pervasive property of Nature and man-made art.
Engineering, architectural, and artistic designs characterized by symmetry usually enjoy structural robustness and efficiency, which also explains why evolution led  many biological constructions to assume symmetric properties. Symmetry also plays an important role in our visual perception, in particular that of beauty, and according to modern physical theories, is incorporated deeply into the laws of the universe itself.

Automatic detection of symmetries of a 3D geometric shape has received significant attention in computational geometry, computer graphics, and vision literature. However, despite the steady progress in the field, the task remains computationally challenging.
Existing approaches to symmetry detection interpret symmetry as invariance under a certain class of transformations and they can be categorized according to several key features.
%Interpreting symmetry as invariance under a certain class of transformations, existing approaches to symmetry detection can be categorized according to several key features.
The taxonomy we present here is by no means complete, and the reader is referred to \cite{mpwc_symmSurvey_12,hel2010computational} for a comprehensive survey of symmetry detection in 3D shapes and images.

First and foremost, symmetry is characterized by a group of admissible transformations. While it is customary to tacitly assume the Euclidean group (defining \emph{rigid} symmetries or \emph{congruences} further categorized into reflections or involutions, rotations, and improper rotations or roto-reflections including the former two), more elaborate types of transformations involving uniform scaling (\emph{similarities}), affine and projective transformations, and even \emph{intrinsic} symmetries have been studied in the literature. Our main focus will be restricted to rigid symmetries, though the proposed algorithm and analysis can be extended to practically any group with a finite (and reasonably small) number of parameters, such as the affine group.

Second, symmetries can be classified as \emph{global}, \emph{partial}, and \emph{local}. Global symmetry is defined by a transformation that maps the whole shape onto itself. A shape not possessing a global symmetry can still have partial symmetries in the form of self-similar parts. Local symmetry usually refers to regular spatial arrangements of a structural element into tilings and ornaments. Here, we focus on global symmetries.
%, though our approach may be extended to handle partial symmetries under appropriate assumptions.

Finally, \emph{exact} (perfect) and \emph{approximate} (imperfect) symmetries can be distinguished: the former map the shape exactly to itself, whereas in case of the latter the mapping leads to a distortion smaller than a pre-defined threshold. Depending on the application, exact partial symmetries can also be regarded as approximate global ones. This work focuses on approximate symmetries.

\subsection{Prior work}
\label{sec:prior-work}

\paragraph*{Exact symmetry.}
Efficient algorithms exist for exact symmetry detection. For example, in the case of a collection of $n$ points in the plane, Atallah \cite{Atallah85} describes an algorithm for enumerating all axes of symmetry under reflection of a planar shape. Wolter {\em et al.} \cite{Wolter85} give exact algorithms, based on string matching, for the detection of symmetries of point clouds, polygons, and polyhedra. %The time complexities of the algorithms are shown to be $\mathcal{O}(n)$ for polygons, and $\mathcal{O}(n \log n)$ for two- and three-dimensional point clouds; $\mathcal{O}(n \log n)$ complexity is also required for general polyhedra.
These algorithms are often impractical due to their sensitivity to noise, because they are restricted to exact symmetries.

\paragraph*{Non-parametric symmetry.}
Several algorithms exist for the detection of non-parametric intrinsic symmetries of deformable shapes (as opposed to the rigid \emph{extrinsic} counterparts).
Raviv {\em et al.} \cite{Raviv} use a branch-and-bound technique to find global intrinsic symmetries with a prescribed distortion of pairwise geodesic distances.
Ovsjanikov {\em et al.} \cite{Ovsjanikov:2008:GIS} detects the global {\em intrinsic} symmetry of shapes using the spectral properties of the Laplace-Beltrami operator.
Lipman {\em et al.} \cite{LipmanCDF10} relieve the assumption of a known transformation group by introducing a symmetry-factored embedding, which enables detecting approximate, as well as partial symmetries of a point cloud, also using spectral methods.

\paragraph*{Approximate symmetry.}
The detection of approximate symmetries has also been addressed in the literature, and can be roughly divided into two approaches:

The first approach defines approximate symmetry by an infimum of a continuous distance function quantifying how similar is a shape to its transformed version.
Zabrodsky {\em et al.} \cite{zabrodsky1995symmetry} proposed such a symmetry distance,
which has been largely adopted and extended by following works, including our approach.
One way to detect an approximate infimum is through an exhaustive evaluation of the transformation space on a grid with a high-enough density.
This task can be done na\"{i}vely in $\mathcal{O}(n^6)$ for a shape discretized by an $n \times n \times n$ grid, and reaching high accuracy in this approach requires bigger $n$ and an increase in computation times.
A more efficient algorithm by Kazhdan {\em et al.} \cite{kazhdan2004symmetry} performs the task in $\mathcal{O}(n^4)$ using an FFT-like approach, but does not provide guarantees on the distance from the optimal possible distortion (See Section~\ref{sec:kazhdan-comparison} for a more detailed comparison to our method).

The second approach alleviates the computational complexity by translating the search into a proxy domain, realizing that the set of admissible symmetries is sparse in the transformation space.
One of the earliest examples is \cite{Sun1997}, which uses the gaussian image as the proxy domain.
Later work by Martinet {\em et al.} \cite{martinet2006accurate} examines extrema and spherical harmonic coefficients of generalized even moments.
Mitra {\em et al.} \cite{MitraGP06} cluster Hough-like votes for transformations that align boundaries with similar local shape descriptors.
Podolak {\em et al.} \cite{podolak2006planar} detect reflection symmetries using a monte-carlo algorithm that selects a pair of surface points and votes for the plane between them.
Searching a proxy space provides a set of candidate transformations, that have to be validated directly using some symmetry measure (e.g., \cite{zabrodsky1995symmetry}). These candidate transformation are usually further refined using e.g. \emph{Iterative Closest Point} (ICP).
Consequently, there is no guarantee on how far is the symmetry measure of the detected symmetries from that of the optimal one.

Our algorithm follows the first approach in that it samples the transformation space, but does so in an efficient manner that guarantees a known approximation error. This enables the use of a branch-and-bound scheme that allows to match the performance of the second approach while maintaining approximation guarantees.

% relation to Fast-Match
\paragraph*{Alignment methods}
A method related to ours is the surface registration algorithm of~\cite{aiger20084} that uses a clever sampling of transformation space.
However, it is not easily applicable to the problem of detecting all approximate symmetries of a shape.

Our work follows that of \cite{FastMatchPaper}, who proposed a fast method for 2D affine template matching in images with global guarantees. As opposed to their sampling density which depends on a generic image assumption (e.g., image smoothness), ours is determined adaptively according to the specific shape `complexity'. Additionally, we manipulate the shape representation to control the sampling density, and hence the algorithm's runtime. Finally, while template matching focuses on finding a single best transformation, our goal is to detect all such transformations.

\subsection{Contributions}
% general guaranty (PAC style)
We detect global rigid symmetries in volumetric representations of 3D shapes, and introduce an algorithm that is guaranteed, with high probability, to detect the best symmetry within a given degree of approximation.
This is inspired by the classical ``probably approximately correct'' (PAC) framework~\cite{valiant1984theory} in learning theory (hence the title of the present paper).
To the best of our knowledge, this is the first symmetry detection algorithm coming with such a guarantee.
An example output of the algorithm, detecting the bilateral symmetry in a brain MRI image, can be seen in Figure \ref{fig.MRI}.

% approximation-TV relation
We provide a bound on the required sampling density of transformation space, which is the basis of our algorithm. This bound depends on the desired approximation level as well as, surprisingly, on the `complexity' of the specific shape, which is manifested through the total variation of its volumetric representation.
We further show how to construct shape representations with reduced total variation leading to reduced complexity, and discuss the tradeoff between complexity and sensitivity to noise.

% experiments
A comprehensive experimental evaluation validates that our approach is capable of detecting approximate symmetries in a large data-set, as well as detecting all symmetries in complicated shapes, all within state-of-the-art execution times.

\section{Approximate rigid symmetries}

%Here we formulate the problem, introduce definitions  and notation.
%I don't like current notation at all.
We start by defining our level-set based shape representation and approximate symmetry. We then bound the sample density required to detect approximate symmetries with a user specified precision parameter $\delta$.
Generally speaking, although we keep all derivations in the continuous setting, they are straightforwardly amenable to any reasonable discretization of the volume, including hierarchical subdivisions.

\subsection{Shape representation}
\label{seq:representation}

Let $\Ss$ be a three-dimensional rigid shape with the centroid aligned at the origin.
We represent the shape by the $\frac{1}{2}$-sub-level set of a level set function $s : \RR^3 \mapsto [0,1]$.
The simplest of such representations is the binary indicator function of $\Ss$ (equalling 1 in the interior); other representations such as truncated distance maps will be discussed in Section~\ref{sec.shape.complex}. We will freely interchange between $\Ss$ and $s$ referring to a shape.
We will take the \emph{radius} of the shape to be the smallest scalar $r$ such that the function $s$ is invariant to rotations outside the Euclidean ball $B_r(0)$ of radius $r$ centered at the origin,
\begin{eqnarray}
r &=& \inf_{r} \{ s(\bb{R}\bb{x}) = s(\bb{x}) : \bb{x} \in {B_r(0), \bb{R} \in \mathrm{SO}(3) }  \}.
\end{eqnarray}
The ball $B_r(0)$ defines the effective support of $s$, which might be larger than the shape $\Ss$.

We associate with the shape the \emph{total variation} of $s$,
\begin{eqnarray}
\TV &=& \frac{1}{\Vol\,B_r} \int_{B_r} \| \nabla s(\bb{x}) \| d\vol(\bb{x}),
\end{eqnarray}
where $d\vol$ denotes the standard volume element. When $s$ is not differentiable, total variation can be defined using the weak derivative.
In particular, for the case of the indicator function $\TV$ is equal to the ratio between the area of the boundary $\partial S$ and the volume of the bounding ball $B_r$.
Geometrically, total variation can be related to the total curvature of the shape and the amount of ``features'' it contains.
Note that for the case of $\OO(3)$, one could have considered derivation and integration only tangent to concentric spheres.

\subsection{Rigid symmetries}

Let $T \in \mathrm{E}(3)$ be a Euclidean transformation (a combination of translation, rotation, and reflection).
The transformed shape $T\Ss$ will be represented by the indicator function $s (T \bb{x})$.
$T$ is said to be an \emph{exact global symmetry} of $\Ss$ if $s(T\bb{x}) = s(\bb{x})$.
The collection of all symmetries of $\Ss$ forms a group under function composition, which we refer to as the \emph{symmetry group} of $\Ss$, denoted by $\Sym\,\Ss$.
Each symmetry $T \in \Sym\,\Ss$ defines a collection of stationary points, $\{ \bb{x} = T\bb{x} \}$, which is known to be either a line or a plane.
Such a line or plane is called a \emph{symmetry axis} (or \emph{plane}) of the shape.
The set of symmetry axes and planes fully defines the symmetry group of a shape.
Since translations have no stationary points, for compactly supported shapes,
$\Sym\,\Ss$ is necessarily a subgroup of the orthogonal group $\OO(3)$ containing rotations and reflections around the shape's centroid.
For this reason, we will henceforth denote the admissible transformations as $3 \times 3$ rotation or reflection matrices, $\bb{R}$.

Exact symmetries are a mathematical idealization rarely achieved in practice due to acquisition and representation inaccuracies.
In order to account for such imperfections, we define the distortion
\begin{eqnarray}
\dis_s\, \bb{R} &=& \frac{1}{\Vol\, B_r} \| s - \bb{R}^{-1}(s) \|_1 \label{def:distortion}\\
&=& \frac{1}{\Vol\,B_r} \int_{B_r} | s(\bb{x}) - s(\bb{R}\bb{x}) | d\vol(\bb{x}), \nonumber
\end{eqnarray}
where $\bb{R}^{-1}(s)$ is a short-hand notation for $(s\circ \bb{R})(x)=s(\bb{R}\bb{x})$. Note that $\dis_s\, \bb{R}$ is bounded to the interval [$0,1$], and equals zero in the case of perfect symmetry.
For solid shapes represented by indicator functions, the distortion can be interpreted as the total amount of mismatched volume and it coincides with the common symmetry measure of \cite{zabrodsky1995symmetry}. Note that such an $L_1$ formulation is more robust to outliers compared to, e.g., the worst-case Hausdorff distance.

We say that $\bb{R} \in \OO(3)$ defines an $\epsilon$-symmetry of $\Ss$ if $\dis_s\,\bb{R} \le \epsilon$,
and denote by $\Sym_\epsilon \Ss$ the collection of all $\epsilon$-symmetries of $\Ss$. Note that unlike their exact counterparts, approximate symmetries do not necessarily form a group.

%While $\Sym_\epsilon \Ss$ is generally not a group for $\epsilon>0$,
%we can still define a natural filtration starting with $\Sym_0 \Ss = \Sym \Ss$ and ending with $\Ss_A \Ss = \Sym \Ss$ for some finite $A$. The parameter $\epsilon$
%establishes a partial order relation on the connected components of $\Sym_\epsilon \Ss$, as any connected component $K$ of $\Sym_\epsilon \Ss$
%is contained in a component with a bigger $\epsilon$.
%
%We will henceforth think of $\epsilon$-symmetries of $\Ss$ as of the set of such connected components, each component being represented by a transformation with the smallest distortion.

\subsection{Sampling of the orthogonal group}

In order to practically detect symmetries, one necessarily has to work with a finite sample of the transformation space (i.e. the orthogonal group). The main ingredient of our approach is an upper bound on the sampling density controlled by the maximum allowed distortion of an approximate symmetry.

We begin by defining a metric between any two transformations in the space, which will be used later to define a net of transformations. The metric measures how far apart any point in the ball $B_r$ may be mapped by two different transformations, formally:
\begin{eqnarray}
D (\bb{R}_1,\bb{R}_2) = \max_{\|\bb{x}\| \le r} \| \bb{R}_1 \bb{x} - \bb{R}_2 \bb{x} \|.
\end{eqnarray}
Note that this distance  does not depend on the shape, but rather only on its support radius $r$.

A key observation is that the difference in the distortion of two transformations is upper bounded by the product of the shape total-variation $\TV$ and the distance $D$ between the transformations. This is formalized in the following proposition, with the accompanying illustration in Figure~\ref{fig.example}.
\begin{proposition}\label{prop:bound}
$|\dis_s\,\bb{R}_1 - \dis_s\,\bb{R}_2| \le \TV \cdot D(\bb{R}_1,\bb{R}_2)$ for any $\bb{R}_1, \bb{R}_2 \in \OO(3)$.
\end{proposition}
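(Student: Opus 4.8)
The plan is to reduce the difference of distortions to a single integral of pointwise differences of the shape function, and then to bound that integral by a line integral of $\nabla s$. First I would apply the reverse triangle inequality $\big||a|-|b|\big|\le|a-b|$ inside the integrands defining $\dis_s\,\bb{R}_1$ and $\dis_s\,\bb{R}_2$; the common term $s(\bb{x})$ cancels, leaving
\begin{equation*}
|\dis_s\,\bb{R}_1 - \dis_s\,\bb{R}_2| \le \frac{1}{\Vol\,B_r}\int_{B_r} |s(\bb{R}_1\bb{x}) - s(\bb{R}_2\bb{x})|\, d\vol(\bb{x}).
\end{equation*}
This isolates the whole problem into controlling how much $s$ can change between the two images $\bb{R}_2\bb{x}$ and $\bb{R}_1\bb{x}$ of a single point.

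Next I would bound the integrand pointwise by integrating along the segment joining $\bb{R}_2\bb{x}$ to $\bb{R}_1\bb{x}$. Writing $\bb{M}_t = (1-t)\bb{R}_2 + t\bb{R}_1$ and applying the fundamental theorem of calculus together with Cauchy--Schwarz gives $|s(\bb{R}_1\bb{x}) - s(\bb{R}_2\bb{x})| \le \int_0^1 \|\nabla s(\bb{M}_t\bb{x})\|\,\|(\bb{R}_1-\bb{R}_2)\bb{x}\|\,dt$. The geometric factor is immediately controlled by $D$, since $\|(\bb{R}_1-\bb{R}_2)\bb{x}\| \le D(\bb{R}_1,\bb{R}_2)$ for every $\|\bb{x}\|\le r$ by the very definition of $D$. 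Pulling this constant out, integrating over $B_r$, and exchanging the order of integration (Fubini) reduces the claim to showing that for each fixed $t$,
\begin{equation*}
\int_{B_r} \|\nabla s(\bb{M}_t\bb{x})\|\,d\vol(\bb{x}) \le \int_{B_r} \|\nabla s(\bb{y})\|\,d\vol(\bb{y}) = \Vol\,B_r \cdot \TV.
\end{equation*}

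The main obstacle is precisely this last step, i.e.\ the change of variables $\bb{y}=\bb{M}_t\bb{x}$. Because $\bb{M}_t$ is a convex combination of two orthogonal matrices it is in general \emph{not} orthogonal: it contracts radii (one checks $\|\bb{M}_t\bb{x}\|\le (1-t)\|\bb{R}_2\bb{x}\|+t\|\bb{R}_1\bb{x}\| = \|\bb{x}\|$, so $\bb{M}_t(B_r)\subseteq B_r$), but its Jacobian $|\det \bb{M}_t|$ may be strictly less than one, which would insert an unwanted factor $1/|\det\bb{M}_t|\ge 1$ upon substitution and spoil the inequality. The clean way around this is to exploit the rotationally invariant structure of $\OO(3)$ noted right after the definition of $\TV$: foliate $B_r$ into concentric spheres $S_\rho$, on each of which $\bb{R}_1$ and $\bb{R}_2$ act as measure-preserving isometries, and connect $\bb{R}_2\bb{x}$ to $\bb{R}_1\bb{x}$ by a path that stays on $S_\rho$ so that only the component of $\nabla s$ tangent to the sphere enters. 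The inner integral is then taken over an orthogonal reparametrization of each sphere, the troublesome Jacobian becomes unity, and the bound collapses to $\TV\cdot D$. I expect verifying this measure-preservation sphere-by-sphere --- and checking that the in-sphere displacement is still controlled by $D$ rather than by the (longer) geodesic length --- to be the one genuinely delicate point; everything else is the reverse triangle inequality, the mean value theorem, and Fubini.
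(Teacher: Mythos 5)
Your opening reduction (the reverse triangle inequality, collapsing the problem to bounding $\frac{1}{\Vol\,B_r}\int_{B_r}|s(\bb{R}_1\bb{x})-s(\bb{R}_2\bb{x})|\,d\vol(\bb{x})$) is exactly the paper's first step, modulo their cosmetic use of the group structure to rename $\bb{R}_2\bb{R}_1^{-1}=\bb{R}$ and $s\circ\bb{R}_1=q$. From there, however, your proof does not close. The straight-line interpolation $\bb{M}_t=(1-t)\bb{R}_2+t\bb{R}_1$ fails for the reason you yourself identify, and the failure is fatal rather than technical: $\bb{M}_t$ can be singular (take $\bb{R}_2=\bb{I}$ and $\bb{R}_1$ a rotation by $\pi$ about the $z$-axis, so that $\bb{M}_{1/2}$ has rank $1$), and then $\int_{B_r}\|\nabla s(\bb{M}_t\bb{x})\|\,d\vol(\bb{x})$ is not controlled by $\Vol\,B_r\cdot\TV$ at all --- for a (mollified) indicator function it blows up as a whole three-dimensional chunk of $B_r$ is mapped into the thin layer where $\|\nabla s\|$ is large. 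Your proposed repair --- move points along paths lying on concentric spheres --- is in essence the paper's actual proof: they flow along $\Phi_{\bb{R}}(\bb{x},t)=\bb{R}^t\bb{x}$, whose orbits stay on the spheres $\|\bb{x}\|=\mathrm{const}$. With that specific choice your first worry evaporates with no sphere-by-sphere argument: the time-$t$ map is the orthogonal matrix $\bb{R}^t$, hence volume preserving on $B_r$.

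But the second point you defer --- whether the in-sphere displacement is controlled by $D$ rather than by the longer spherical length --- is a genuine obstruction, not a verification detail, and it cannot be resolved the way you hope. Any path from $\bb{R}_2\bb{x}$ to $\bb{R}_1\bb{x}$ constrained to a sphere has length at least the great-circle distance, which strictly exceeds the chord $\|\bb{R}_1\bb{x}-\bb{R}_2\bb{x}\|$. Concretely, for the orbit flow, a point at distance $d$ from the axis of the relative rotation (angle $\theta$) travels an arc of length $d\theta$, whereas $D(\bb{R}_1,\bb{R}_2)=2r\sin(\theta/2)$; the worst-case ratio $\frac{r\theta}{2r\sin(\theta/2)}$ equals $\pi/2$ at $\theta=\pi$. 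So the spherical-path argument yields $|\dis_s\,\bb{R}_1-\dis_s\,\bb{R}_2|\le\frac{\pi}{2}\,\TV\cdot D(\bb{R}_1,\bb{R}_2)$, not the stated inequality with constant $1$. You are in good company: the paper's own proof asserts, without justification, that the orbit length is bounded by $D(\bb{I},\bb{R})$, which is false for $\theta$ near $\pi$, so strictly speaking it also only establishes the $\pi/2$-weaker bound (harmless for everything downstream, since it merely rescales the net radius $\rho$, but a gap nonetheless). A further shared caveat: the flow $\bb{R}^t$ exists only when $\bb{R}_2\bb{R}_1^{-1}\in\mathrm{SO}(3)$; when $\bb{R}_1,\bb{R}_2$ lie in different components of $\OO(3)$ both your sketch and the paper's proof need a separate argument. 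In sum, your diagnosis of the two obstacles is exactly right, but the proof is incomplete, and the step you leave open is precisely the one where a constant factor must be conceded.
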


\begin{proof}
First, observe that invoking the triangle inequality and using the group properties,
\begin{eqnarray}
|\dis_s\,\bb{R}_1 &-& \dis_s\, \bb{R}_2| \quad \ldots \nonumber \\
&=& \frac{1}{\Vol\,B_r}\cdot| \|s - \bb{R}_1^{-1}(s) \|_1 - \| s - \bb{R}_2^{-1}(s) \|_1 | \nonumber \\
 &\le& \frac{1}{\Vol\,B_r}\cdot\| \bb{R}_1^{-1}(s) -  \bb{R}_2^{-1}(s) \|_1\nonumber \\
  &=& \frac{1}{\Vol\,B_r}\cdot\| q - \bb{R}_1^{-1}(q) \|_1 = \dis_q\,\bb{R}\label{eq:dis} %= \dis_s (q, R), \nonumber
\end{eqnarray}
with $q = \bb{R}^{-1}(s)$ and $\bb{R} = \bb{R}_2 \bb{R}_1^{-1}$.
We can therefore define a new shape $\mathcal{Q} = \bb{R}_1 \Ss$ with the corresponding function $q$, and operate
with $\dis_s\,\bb{R}$.
Using the group properties, it is also straightforward that $D(\bb{R}_1,\bb{R}_2) = D(\bb{I},\bb{R})$, with $\bb{I}$ being the identity transformation.

$\quad$ We define the flow $\Phi_{\bb{R}} : (\bb{x},t) \rightarrow \bb{R}^t \bb{x}$, $t \in [0,1]$,
inducing the orbits $C(\bb{x}) = \{ \bb{R}^t \bb{x} : \bb{x} \in B_r, t \in [0,1] \}$, whose length is upper-bounded by $D(\bb{I},\bb{R})$ .
Using the triangle inequality,
\begin{eqnarray}
| q(\bb{x}) - q(\bb{R}\bb{x}) |  &=&  | q \circ \Phi_{\bb{R}}(\bb{x},0) - q \circ \Phi_{\bb{R}}(\bb{x},1) |\label{eq:orbit}\\
& \le & \int_0^1 \| \nabla (q \circ \Phi_{\bb{R}}(\bb{x},t)) \| \, \| \dot{\Phi}_{\bb{R}} (\bb{x},t) \| dt \nonumber
\end{eqnarray}
where $\dot{\Phi}_{\bb{R}}(\bb{x},t) = \frac{\partial}{\partial t} \Phi_{\bb{R}}(\bb{x},t)$.
We can now derive that
\begin{eqnarray}\label{eq:integrals}
\lefteqn{\Vol\,B_r \cdot \dis_q\,\bb{R} = \int_{B_r} |q(\bb{x}) - q(\bb{R}\bb{x})| d\vol(\bb{x})}\nonumber\\
&& \le  \int_{B_r} \int_0^1  \| \nabla (q \circ \Phi_{\bb{R}}(\bb{x},t)) \| \, \| \dot{\Phi}_{\bb{R}} (\bb{x},t) \| dt \, d\vol(\bb{x})  \nonumber\\
&& =  \int_0^1  \int_{B_r}  \| \nabla (q \circ \Phi_{\bb{R}}(\bb{x},t)) \| \, \| \dot{\Phi}_{\bb{R}} (\bb{x},t) \|  d\vol(\Phi_{\bb{R}}(\bb{x},t)) \, dt   \nonumber\\
&& \le  D(\bb{I},\bb{R}) \cdot \int_{B_r}  \| \nabla (q \circ \Phi_{\bb{R}}(\bb{x},t)) \|    d\vol(\Phi_{\bb{R}}(\bb{x},t))  \nonumber\\
&& = D(\bb{I},\bb{R}) \cdot \int_{B_r}  \| \nabla q \|  d\vol(\bb{x})  = D(\bb{I},\bb{R}) \cdot  \Vol\,B_r \TV \;\; ,
\end{eqnarray}
where the first inequality follows from~(\ref{eq:orbit}), the following equality follows from the volume preservation under rotations and reflections, and the final equality from the definition of the total variation $\TV$.

$\quad$ Finally, the proposition follows by combining inequalities (\ref{eq:dis}) and (\ref{eq:integrals}) and using the fact that $D(\bb{R}_1,\bb{R}_2) = D(\bb{I},\bb{R})$.
\end{proof}

%%%%%%%%%%%%%%%%%%%%%%%%%%  examplary drawing  %%%%%%%%%%%%%%%%%%%%%

\begin{figure}[t]
\begin{center}
    \includegraphics[width=0.9\linewidth]{./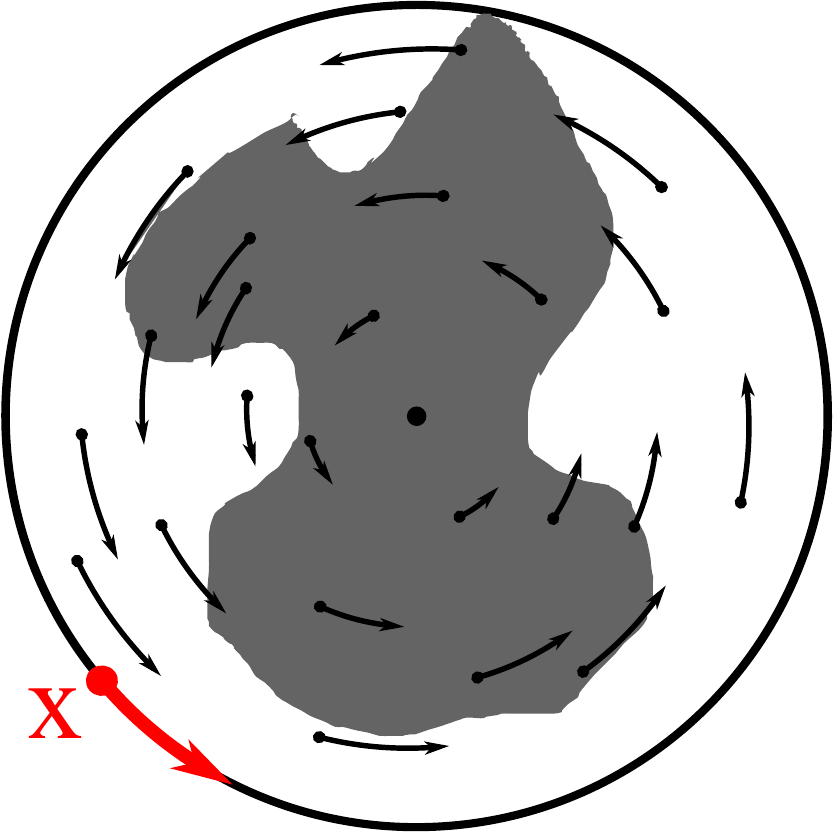}%\vspace{-6pt}
    \caption{\textbf{A 2D illustration of Proposition~\ref{prop:bound}. } A planar shape (gray region) represented by a function $s$ undergoes a rotation $\bb{R}$. The change in value at every point is bounded by the accumulation of $\|\nabla s\|$ along the orbit it travels (marked by black arrows).
    The longest path $D(\bb{I},\bb{R}$) is traveled by the points farthest from the rotation axis, e.g. the one marked in red.
    Integrating these changes over the entire ball gives $dis_s \bb{R}$, which is bounded by $\TV \cdot D(\bb{I},\bb{R})$.}
    \label{fig.example}
\end{center}
\vspace{-25pt}
\end{figure}

Proposition~\ref{prop:bound} bounds the change in the distortion under a bounded displacement in the transformation space (the orthogonal group $\OO(3)$). We can now turn to defining a finite sampling of $\OO(3)$.
Let us fix a precision parameter $\delta>0$, set a sampling radius $\rho = \delta/\TV$, and construct a discrete set of transformations, $\Nn_\rho$, forming a $\rho$-net in $\OO(3)$ with respect to the distance $D$, namely that any point in $\OO(3)$ has a sample in $\Nn_\rho$ at a distance of at most $\rho$.

The rationale behind defining such a net of transformations is as follows: Let $\bb{R}^\ast \in \OO(3)$ be an $\epsilon$-symmetry of $\Ss$. While $\bb{R}^\ast$ will not necessarily be contained in the net $\Nn_\rho$, there will exist some other $\bb{R} \in \Nn_\rho$ with $\dis_s\,\bb{R} \le \epsilon+\delta$.\, In other words, evaluating the distortion of all the transformations in such a net guarantees the detection of symmetries within a predefined distortion.
In the following proposition, we describe an efficient construction of such a net and give a bound on its size.

\begin{proposition}\label{prop:sample-size}
Let $\Nn_\rho$ be a $\rho$-net in $\OO(3)$ with respect to the distance $D$. Then, $|\Nn_\rho| \le n_\rho = 4\pi\left(\frac{\rho}{r} - \sin \frac{\rho}{r}\right) \sim \mathcal{O}\left((\frac{r}{\rho})^3\right)$.
\end{proposition}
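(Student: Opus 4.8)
The plan is to convert the cardinality bound into a volume–packing estimate on the group, exploiting two facts: that the distance $D$ is bi-invariant, and that the Haar measure of $\OO(3)$ has a simple closed form in axis--angle coordinates. First I would reduce to $\mathrm{SO}(3)$: since $\OO(3)$ is the disjoint union of $\mathrm{SO}(3)$ and the coset of reflections, and since left translation by any fixed reflection $\bb{F}$ is a $D$-isometry (by left-invariance) carrying one component onto the other, it suffices to net $\mathrm{SO}(3)$ and double the count. On $\mathrm{SO}(3)$ I would record that $D$ is bi-invariant, because $\|\bb{Q}(\bb{R}_1-\bb{R}_2)\bb{x}\|=\|(\bb{R}_1-\bb{R}_2)\bb{x}\|$ for orthogonal $\bb{Q}$ and $\bb{x}\mapsto\bb{Q}\bb{x}$ permutes $B_r$; hence $D(\bb{R}_1,\bb{R}_2)=D(\bb{I},\bb{R}_2\bb{R}_1^{-1})$ depends only on the rotation angle $\theta$ of $\bb{R}_2\bb{R}_1^{-1}$. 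A direct computation (the farthest-moved point sits at distance $r$ from the axis, exactly the red point of Figure~\ref{fig.example}) yields the closed form $D(\bb{I},\bb{R})=2r\sin(\theta/2)$.

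Next I would compute the volume of a metric ball. Writing $\mathrm{SO}(3)$ in axis--angle coordinates $(\hat{\bb{n}},\theta)$ with $\hat{\bb{n}}\in\mathbb{S}^2$ and $\theta\in[0,\pi]$, the bi-invariant Haar measure is proportional to $(1-\cos\theta)\,d\theta\,d\Omega(\hat{\bb{n}})$ (this is the round measure of the double cover $\mathbb{S}^3$ pushed down via $\theta=2\alpha$). Because $D$ depends on $\theta$ alone, the ball $B_\rho=\{\bb{R}:D(\bb{I},\bb{R})\le\rho\}$ is the band $\{2r\sin(\theta/2)\le\rho\}$, i.e. $\theta\le\theta_\rho:=2\arcsin\frac{\rho}{2r}$, so integrating the angular part against $d\Omega$ gives $\Vol B_\rho \propto 4\pi(\theta_\rho-\sin\theta_\rho)$. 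In the relevant small-angle regime $\theta_\rho\approx\rho/r$, which reproduces the stated expression $4\pi(\frac{\rho}{r}-\sin\frac{\rho}{r})$ up to the Haar normalization constant.

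Finally I would run the standard greedy packing argument: construct $\Nn_\rho$ as a maximal $\rho$-separated subset, whose maximality forces it to be a $\rho$-net, while the open $D$-balls of radius $\rho/2$ about its points are pairwise disjoint. By bi-invariance all these balls have equal volume $\Vol B_{\rho/2}$, so $|\Nn_\rho|\le \Vol(\OO(3))/\Vol B_{\rho/2}$, and the Taylor expansion $\theta_\rho-\sin\theta_\rho=\frac16\theta_\rho^3+\mathcal{O}(\theta_\rho^5)$ with $\theta_\rho=\Theta(\rho/r)$ gives $\Vol B_{\rho/2}=\Theta((\rho/r)^3)$ and hence $|\Nn_\rho|=\mathcal{O}((r/\rho)^3)$. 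For the accompanying explicit construction I would take a $(\rho/2r)$-net of axes on $\mathbb{S}^2$ ($\Theta((r/\rho)^2)$ points) and a uniform angular grid of spacing $\Theta(\rho/r)$ ($\Theta(r/\rho)$ points), whose product realizes the same count. The main obstacle is the bookkeeping linking the extrinsic displacement metric $D$ to the intrinsic Haar volume: one must justify the axis--angle form of the measure and control the coordinate degeneracies at $\theta=0$ (axis undefined) and $\theta=\pi$ (antipodal axes identified), so that the ball-volume integral and the packing inequality are applied on a genuine fundamental domain rather than an overcounted chart.
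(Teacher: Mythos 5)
Your proposal is correct, and at its core it is the same argument as the paper's: a volume-packing bound on the group, with the metric-ball volume proportional to $\theta-\sin\theta$ and a Taylor expansion giving the cubic rate. The differences are in execution. Where the paper never computes $D$ exactly but only invokes the comparison $D \le r\cdot d$ (with $d$ the geodesic distance), passes to a $\rho'=\rho/r$ net with respect to $d$, and packs geodesic balls in $\mathbb{S}^3$ using standard cap volumes, you pack directly in $(\OO(3),D)$ via the exact chord formula $D(\bb{I},\bb{R})=2r\sin(\theta/2)$ together with the axis--angle form of the Haar measure; both yield $\Theta((r/\rho)^3)$, the paper's route being shorter (it borrows sphere volumes off the shelf), yours being self-contained and exact, at the price of the extra bookkeeping you correctly flag (justifying the density $(1-\cos\theta)\,d\theta\,d\Omega$ and the chart degeneracies at $\theta=0,\pi$). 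A second, more substantive difference: you build $\Nn_\rho$ as a \emph{maximal} $\rho$-separated set, so that the net property and the disjointness of the $\rho/2$-balls hold simultaneously. The paper instead asserts that its net is $\rho'$-separated, which does not follow from the net property alone --- indeed, as literally stated the proposition is false, since a net stays a net when arbitrary points are added; your packing formulation is the correct reading (an existence statement / a bound on maximal separated sets), and it matches what the paper's proof actually does implicitly. Third, you treat the two connected components of $\OO(3)$ explicitly, by translating an $\mathrm{SO}(3)$-net by a fixed reflection, where the paper absorbs this into the constant $2\Vol\,\mathbb{S}^3$. Finally, note that the displayed formula in the statement contains a typo: by the paper's own equation $n_\rho\,\pi(\rho'-\sin\rho')=4\pi^2$, the bound should read $n_\rho = 4\pi\left(\frac{\rho}{r}-\sin\frac{\rho}{r}\right)^{-1}$; your interpretation of $\theta_\rho-\sin\theta_\rho$ as a ball volume to be divided into the total volume is consistent with the corrected statement and with the claimed $\mathcal{O}((r/\rho)^3)$ asymptotics.
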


\begin{proof} First, from $D \le r \cdot d$ ($d$ being the standard geodesic distance on $\OO(3)$), we conclude that a $\rho'$-net (for $\rho' = \rho/r$) is a $\rho$-net in the metric $D$. We can therefore use the more convenient $d$, proceeding with the standard packing number argument:
The total volume of $\OO(3)$ is given by twice the area of the three-dimensional hypersphere, $2\Vol\,\mathbb{S}^3$.
Since $\Nn_\rho$ is $\rho'$-separated, the balls $B_{\rho'/2}$ in $\mathbb{S}^3$ form a disjoint collection, whose volume is
smaller than the total volume of $\OO(3)$. The bound is obtained by demanding
$n_\rho \Vol\,B_{\rho'/2} = 2\Vol\,\mathbb{S}^3$. Substituting closed form expressions for the volumes on the sphere yields
$n_\rho \pi (\rho' - \sin \rho')  = 4 \pi^2$,
from where $n_\rho$ is obtained.
Finally, using the Taylor expansion $\sin \rho' = \rho' - \frac{\rho^{\prime 3}}{6} + \mathcal{O}(\rho^{\prime 5})$ yields $\rho' - \sin \rho' \sim \mathcal{O}(\rho^{\prime 3})$.
\end{proof}

Our sampling of the orthogonal group is summarized in the following corollary, which is a direct consequence of the combination of Propositions~\ref{prop:bound} and \ref{prop:sample-size}, with the choice of $\rho = \delta/\TV$.

\begin{corollary}\label{coro:sampling}
For a given precision parameter $\delta>0$, there exists a sampling of the orthogonal group of size $\mathcal{O}( (\frac{r \cdot \TV }{\delta})^3 )$, such that for any given transformation $T$ in $\OO(3)$, the sample contains a transformation, whose distortion is bounded away by $\delta$ from the distortion of $T$.
\end{corollary}

Such a sampling of $\OO(3)$ can be achieved, using e.g. farther-point sampling \cite{eldar1997farthest} or fixed-step strategies. The choice of $\rho = \delta/\TV$ translates into an angular density of about $\frac{2\pi\cdot\delta}{r \cdot \TV }$ in the standard axis-angle parametrization.

\subsection{Fast evaluation of the distortion}\label{sec.fast-evaluation}

A na\"{\i}ve symmetry detection algorithm consists of testing whether $\dis_s\,\bb{R} \le \delta$ for each $\bb{R}$ in the net $\Nn_\rho$.
However, such a test requires the computation of the integral (\ref{def:distortion}) for each sample, which results in a non-trivial complexity.
To alleviate this burden, we use a faster randomized sub-linear sampling procedure, which gives approximately the same result with overwhelmingly high probability.

Let $\bb{x}_1,\cdots,\bb{x}_m$ be points randomly drawn from the uniform distribution on $B_r$.
We define the approximate distortion as
\begin{eqnarray}
\widetilde{\dis}_s \bb{R} = \frac{1}{m} \sum_{i=1}^m | s(\bb{x}_i) - s(\bb{R}\bb{x}_i)|,
\end{eqnarray}
where each of the summands is bounded on $[0,1]$.
Since $\mathbb{E}\{ \widetilde{\dis}_s\bb{R} \} = \dis_s\,\bb{R}$, we can use the Chernoff-Hoeffding inequality to bound the probability $P( |\widetilde{\dis}_s\bb{R} - \dis_s\,\bb{R}| > \epsilon)$, leading to the following
\begin{proposition}\label{prop:proptest}
For $m_\epsilon = \frac{1}{2 \epsilon^2} \log \frac{2}{p} = \mathcal{O}(\epsilon^{-2}\log\frac{1}{p})$,
$|\dis_s\,\bb{R} - \widetilde{\dis}_s \bb{R} | \le \epsilon$ with probability higher than $1-p$.
\end{proposition}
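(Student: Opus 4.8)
The plan is to apply the Chernoff--Hoeffding concentration inequality directly to the random variable $\widetilde{\dis}_s\,\bb{R}$. The key structural facts that make this work are already laid out in the setup: the estimator is an empirical mean $\frac{1}{m}\sum_{i=1}^m Y_i$ of the i.i.d. summands $Y_i = |s(\bb{x}_i) - s(\bb{R}\bb{x}_i)|$, each summand takes values in the bounded interval $[0,1]$, and the expectation $\mathbb{E}\{Y_i\} = \dis_s\,\bb{R}$ coincides with the true distortion (this last point follows from $\bb{x}_i$ being drawn uniformly on $B_r$, so the sample average is an unbiased Monte Carlo estimate of the integral in~(\ref{def:distortion})).

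First I would recall the two-sided Hoeffding bound for an average of independent variables confined to an interval of length one: for such an estimator, $P\big(|\widetilde{\dis}_s\,\bb{R} - \dis_s\,\bb{R}| > \epsilon\big) \le 2\exp(-2m\epsilon^2)$. Next I would set the goal of making the right-hand side at most the failure probability $p$, i.e. impose $2\exp(-2m\epsilon^2) \le p$. Taking logarithms and solving for $m$ gives the threshold $m \ge \frac{1}{2\epsilon^2}\log\frac{2}{p}$, which is exactly the quantity $m_\epsilon$ in the statement. Finally I would read off that choosing $m = m_\epsilon$ samples forces the deviation probability below $p$, so that $|\dis_s\,\bb{R} - \widetilde{\dis}_s\,\bb{R}| \le \epsilon$ holds with probability at least $1-p$, and note that $m_\epsilon = \mathcal{O}(\epsilon^{-2}\log\frac{1}{p})$ is immediate from the closed form.

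This proof is essentially a one-step invocation of a standard tail bound, so there is no genuine obstacle to overcome; the only points requiring care are bookkeeping rather than mathematical. I would make sure to verify the two hypotheses of Hoeffding's inequality explicitly, namely independence of the $\bb{x}_i$ (guaranteed by the independent uniform draws) and the range bound $Y_i \in [0,1]$ (which follows since $s$ maps into $[0,1]$, so each absolute difference lies in $[0,1]$ and the interval length entering the exponent is one). I would also confirm that the constant in the exponent is the correct $2$ for the two-sided version with unit-length ranges, since an off-by-a-constant slip there would change the factor in $m_\epsilon$. Once these details are pinned down, inverting the bound to isolate $m$ completes the argument.
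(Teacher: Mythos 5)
Your proposal is correct and is exactly the argument the paper intends: the text preceding the proposition notes that $\mathbb{E}\{\widetilde{\dis}_s\bb{R}\} = \dis_s\,\bb{R}$ with summands bounded in $[0,1]$, and invokes the Chernoff--Hoeffding inequality, which upon solving $2\exp(-2m\epsilon^2) \le p$ for $m$ yields precisely $m_\epsilon = \frac{1}{2\epsilon^2}\log\frac{2}{p}$. Your added care about independence, the unit-length range, and the constant $2$ in the exponent is sound bookkeeping but does not deviate from the paper's route.
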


\section{Symmetry detection algorithm}

\begin{algorithm}[t]
\hrulefill
\SetAlgoLined
\SetKwInOut{Input}{input}\SetKwInOut{Output}{output}

\Input{Shape $\Ss$ represented by $s$; precision parameter $\delta>0$; error probability $p$ }

\Output{Approximate symmetry $\bb{R} \in \OO(3)$; approximate distortion $d$ }

\medskip

Construct a $\rho=\frac{\delta}{2\TV}$-net $\Nn_\rho$ on $\OO(3)$

\smallskip

\ForEach{$\bb{R} \in \Nn_\rho$}
{

\smallskip

Sample $m_{\delta/2} =\frac{2}{\delta^2} \log \frac{2}{p}$ random points from $B_r$
\smallskip

Compute $\widetilde{\dis}_s \bb{R} = \frac{1}{m} \sum_{i=1}^{m} | s(\bb{x}_i) - s(\bb{R}\bb{x}_i)|$
\smallskip
}

\smallskip

Return $\bb{R}$ with the minimal $d = \widetilde{\dis}_s \bb{R}$ \\
\hrulefill

\medskip

\caption{Best approximate symmetry detection.  \label{algo:single-symm} }

\end{algorithm}

Putting the pieces together, we summarize in Algorithm~\ref{algo:single-symm} the proposed method for detecting the best approximate symmetry.
Combining the previous results, we state the following
\begin{theorem}\label{thm.algorithm}
The runtime complexity of Algorithm~\ref{algo:single-symm} is $\mathcal{O}((r\cdot\TV)^3 \delta^{-5} \log \frac{1}{p})$ and with probability $1-p$, it holds that:
\begin{enumerate}
\item if $d\le 0.5\cdot\delta$, then $\bb{R}$ is a $\delta$-symmetry of $\Ss$
\item if $d > 1.5\cdot\delta$, then $\Ss$ has no $\delta$-symmetries.
\end{enumerate}
\end{theorem}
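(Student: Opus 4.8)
The plan is to isolate a single high-probability event on which every sampled distortion is uniformly close to its true value, and then to derive both parts of the statement deterministically on that event. The runtime is the easy half: by Corollary~\ref{coro:sampling} with precision $\delta/2$ (i.e.\ $\rho=\delta/(2\TV)$) the net has size $|\Nn_\rho|=\mathcal{O}((r\cdot\TV/\delta)^3)$, while for each of its elements the estimate $\widetilde{\dis}_s\bb{R}$ costs $\mathcal{O}(m_{\delta/2})=\mathcal{O}(\delta^{-2}\log\frac1p)$ constant-cost evaluations of $s$. Multiplying the two factors yields exactly $\mathcal{O}((r\cdot\TV)^3\delta^{-5}\log\frac1p)$.

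For correctness I would condition on the event
\[
\mathcal{E}=\{\,|\widetilde{\dis}_s\bb{R}-\dis_s\bb{R}|\le\tfrac{\delta}{2}\ \text{for all}\ \bb{R}\in\Nn_\rho\,\}.
\]
Proposition~\ref{prop:proptest} with $\epsilon=\delta/2$ gives this bound for a single fixed $\bb{R}$ with probability at least $1-p$; to secure it simultaneously over the whole net I would take a union bound, setting the per-point failure probability to $p/|\Nn_\rho|$. This merely replaces $\log\frac2p$ by $\log\frac{2|\Nn_\rho|}{p}$ in the sample count, adding a lower-order $\mathcal{O}(\log(r\TV/\delta))$ term that leaves the complexity unchanged at the stated order, so that $P(\mathcal{E})\ge 1-p$.

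On $\mathcal{E}$ both parts reduce to chaining two inequalities. For (1) the output $\bb{R}$ lies in $\Nn_\rho$ and has $d=\widetilde{\dis}_s\bb{R}\le\frac12\delta$, hence $\dis_s\bb{R}\le d+\tfrac{\delta}{2}\le\delta$, i.e.\ $\bb{R}$ is a $\delta$-symmetry. For (2) I argue by contraposition: were some $\bb{R}^\ast$ a $\delta$-symmetry, the net property would supply $\bb{R}\in\Nn_\rho$ with $D(\bb{R},\bb{R}^\ast)\le\rho$, and Proposition~\ref{prop:bound} would give $\dis_s\bb{R}\le\dis_s\bb{R}^\ast+\TV\cdot\rho\le\tfrac32\delta$; carrying this bound through $\mathcal{E}$ to the estimate and using that $d$ minimises over the net then bounds $d$ from above, which is precisely what the threshold in (2) is set to contradict.

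The main obstacle is the error budgeting behind (2). Claim (1) is exposed only to the sampling error, whereas (2) must route the hypothetical optimum first through a net neighbour---incurring the approximation error $\TV\cdot\rho=\delta/2$ of Proposition~\ref{prop:bound}---and then through the random estimate, adding a further $\delta/2$. Keeping this two-step chain inside the certification gap $[\tfrac12\delta,\tfrac32\delta]$ is exactly where the constants are tight, so the real work is to apportion the half-width $\delta/2$ between the net and sampling terms and, in tandem, to discharge the union bound that promotes the single-transformation guarantee of Proposition~\ref{prop:proptest} into the uniform event $\mathcal{E}$.
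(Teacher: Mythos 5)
Your overall route is exactly the paper's intended one --- the paper in fact gives no explicit proof, stating the theorem as a direct combination of Corollary~\ref{coro:sampling} and Propositions~\ref{prop:bound} and~\ref{prop:proptest} --- and your runtime count and part (1) are correct. Your insistence on a union bound over $\Nn_\rho$ is a genuine refinement the paper glosses over: the returned $\bb{R}$ is an argmin over the net, so the single-transformation guarantee of Proposition~\ref{prop:proptest} does not apply to it directly; note only that the union bound strictly adds a multiplicative $\log(r\,\TV/\delta)$ factor to the sample count, so the stated complexity holds up to that logarithmic correction.

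Part (2), however, is where your proposal has a real gap, and it is precisely the step you defer to ``the real work.'' With the algorithm's own parameters the budget does not close: the net radius $\rho=\delta/(2\TV)$ already incurs a net error of $\TV\cdot\rho=\delta/2$, and the sampling accuracy contributes another $\delta/2$. So if $\bb{R}^\ast$ is a $\delta$-symmetry with net neighbour $\bb{R}\in\Nn_\rho$, the chain you describe yields $d\le\widetilde{\dis}_s\bb{R}\le\dis_s\bb{R}+\tfrac{\delta}{2}\le\bigl(\delta+\tfrac{\delta}{2}\bigr)+\tfrac{\delta}{2}=2\delta$, which does \emph{not} contradict $d>1.5\,\delta$. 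The apportionment you postpone --- fitting both error sources inside a total budget of $\delta/2$ --- is impossible with the stated constants, since the net error alone consumes that entire budget. Indeed, in the worst case (a $\delta$-symmetry whose net neighbour has true distortion exactly $1.5\,\delta$) the estimate at that neighbour exceeds $1.5\,\delta$ with probability close to $1/2$ regardless of $m$, so no argument can rescue the stated threshold. To complete the proof one must either tighten the algorithm (e.g.\ $\rho=\delta/(4\TV)$ and sampling accuracy $\delta/4$, which preserves the asymptotic complexity and makes your two-step chain give $d\le\delta+\tfrac{\delta}{4}+\tfrac{\delta}{4}=1.5\,\delta$) or weaken condition (2) to $d>2\delta$. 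This slack is inherited from the paper itself --- Theorem~\ref{thm.algorithm} as stated does not follow from the propositions with Algorithm~\ref{algo:single-symm} as specified --- but your proposal, as a proof of the stated theorem, leaves its crucial claim unproven.
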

Observe that unless some elements of $\Nn_\rho$ are removed, the second condition will never happen, as the algorithm will return a symmetry $\delta$-close to the identity transformation.

%%%%%%%%%%%%%%%%%%%%%%%%%% FLOOD  %%%%%%%%%%%%%%%%%%%%%%%%
\begin{figure}[b!]%{r}{0.55\textwidth} %[h]
\vspace{-3pt}
  \begin{center}
 \addtolength{\tabcolsep}{-1pt}
    \begin{tabular}{c c c  }%\vspace{-12pt}
      \includegraphics[width = 0.14\textwidth]{./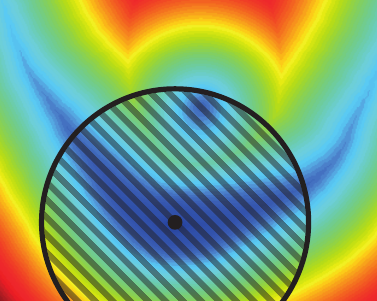} &
      \includegraphics[width = 0.14\textwidth]{./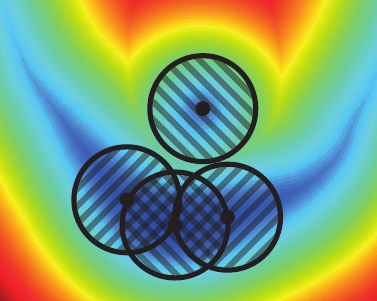} &
      \includegraphics[width = 0.14\textwidth]{./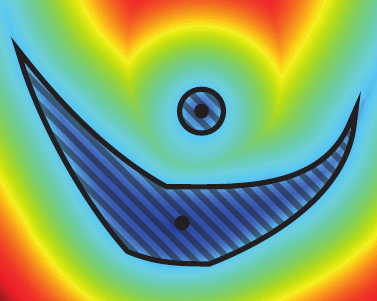} \\ % walking horse
      \small{too large radius} & \small{too small radius} & \small{flood-fill}\\
         \end{tabular}
         \vspace{-8pt}
\caption{\textbf{Illustration of minima neighbourhood removal.}
A 2D function, visualized using a heat-map.
In our implementation, we use a fixed-sized removal radius.
Using a too large radius may remove other minima, while a too small one leaves areas that might be detected in the next iteration.
These phenomena may be avoided
by applying a flood-fill procedure.% to $\Nn_\rho$ as shown in the rightmost image.
}
\label{fig:flood}
  \end{center}\vspace{-20pt}
\end{figure}
%%%%%%%%%%%%%%%%%%%%%%%%%%%%%%%%%%%%%%%%%%%%%%%%%%%%%%%%%%%%%%%%%%

Algorithm~\ref{algo:single-symm} detects a single approximate symmetry of $\Ss$. In order to detect the entire $\Sym_\delta \Ss$, we run the algorithm sequentially, each time removing a neighborhood of the detected transformation $\bb{R}$ from $\Nn_\rho$.
The neighborhood can be naturally defined as the $\delta$-component of $\bb{R}$, computed by applying a flood-fill procedure to $\Nn_\rho$.
Alternatively, the neighborhood can be defined as a ball of a fixed radius with respect to the standard geodesic distance on $\OO(3)$. The latter approach was adopted in our experiments due to its simplicity, despite the problems that may arise when using a too small or too big radius (see Figure~\ref{fig:flood} for an illustration). % (a) and (b)).

Algorithm~\ref{algo:all-symm} summarizes the described procedure for the detection of all approximate symmetries.
When a rotation symmetry is detected, we further investigate its axis to find its $n$-fold symmetries (up to some integer $N$). We report on an $n$-fold rotation if the distortion of all its $n$ members is below $\delta$.
A continuous (axial) rotational symmetry is reported when the distortions of all members of all $n$-fold rotations, $n=2,\ldots,N$, are below $\delta$.

\begin{algorithm}[t]
\hrulefill
\SetKwInOut{Input}{input}\SetKwInOut{Output}{output}

\Input{Shape $\Ss$ represented by $s$; precision parameter $\delta>0$; error probability $p$ }

\Output{Collection of $\delta$-symmetries $S = \Sym_\delta \Ss$ }

\medskip

Initialize $S = \emptyset$

\smallskip

Construct a $\rho=\frac{\delta}{2\TV}$-net $\Nn_\rho$ on $\OO(3)$

\smallskip

\While{not all symmetries have been detected}
{

\smallskip

Run Algorithm~\ref{algo:single-symm} on $\Nn_\rho$ to detect $\bb{R}$ with approximate distortion $d$
\smallskip

\lIf{$d > 1.5\cdot\delta$}{stop}

\eIf{$\bb{R}$ is a rotation (and not a reflection)}{
   Let $\mathcal{X}$ be the set of $n$-fold symmetries along its axis, with distortion $\le\delta\quad$ ($n=2,\cdots,N$)
   }{
   Let $\mathcal{X} = \{\bb{R}\}$
  }

{Add $\mathcal{X}$ to $S$ and remove from $\Nn_\rho$ a fixed-sized neighborhood of each symmetry in $\mathcal{X}$}

\smallskip
}

\smallskip

Return all detected transformations $\bb{R}$ \\
\hrulefill

\medskip

\caption{Detection of all approximate symmetries.  \label{algo:all-symm} }
\end{algorithm}

\subsection{Manipulating the shape complexity} \label{sec.shape.complex}

For a fixed precision $\delta$, the complexity of our symmetry detection algorithm is governed by the term $(r\cdot\TV)^3$, which is the cube of the shape \emph{complexity factor} $\CC = r\cdot\TV$, a unit-less quantity that resembles the isoperimetric quotient and describes the geometric complexity of the function $s$ representing the shape.

Through the total variation of $s$, $\CC$ depends on the function representing $\Ss$ and not directly on $\Ss$ itself. This leads to the important issue of designing representation functions for shapes that minimize the computation complexity.

To this end, we suggest controlling the shape complexity using \emph{truncated signed distance function} (TSDF) representations.
The advantages of doing so are two-fold: First, the TSDFs produce smoother shape representations, which lead to faster running time.
Additionally, the resulting representation values have lower variance (as a result of increased smoothness), allowing use of tighter bounds than the one stated in Proposition~\ref{prop:proptest}, which only assumes that the summands are bounded but does not consider their variance.

Nevertheless, these advantages come at the cost of the enhancement of  thin structures (possibly amplifying the effect of shot-noise) as well as the possible loss of discriminativity, due to smoothing of fine details. The effects (noise amplification and loss of discriminativity) are studied below. Also, as part of our large-scale experiment in Section~\ref{sec.largescle}, we analyze the influence of these choices on discriminativity.

For a given (truncation) constant $K$, we define the Euclidean TSDF by
\begin{eqnarray}
d_K(\bb{x},\partial \Ss) &=& \min\{K, \max\{-K, d(\bb{x},\partial \Ss) \}\}
\end{eqnarray}
where $d(\bb{x},\partial \Ss)$ is the signed distance map from the boundary $\partial \Ss$ of the shape $S$.
Since $d(\bb{x},\partial \Ss)$ satisfies the eikonal equation $\| \nabla d(\bb{x},\partial \Ss) \| = 1$ almost everywhere,
the total variation of $d_K(\bb{x},\partial \Ss)$ is bounded by
\begin{eqnarray}
\TV(d_K) &\le & \frac{c\cdot K \cdot \Area\, \partial \Ss}{\Vol\,B_{r+K}}, %\sim \mathcal{O}\left( \frac{K}{(r+K)^3} \right).
\end{eqnarray}
where $c$ is a constant.

We construct a family of functions
\begin{eqnarray}
s_K(\bb{x}) &=& \frac{1}{2K} d_K(\bb{x},\partial\Ss) + \frac{1}{2}
\end{eqnarray}
with the image in $[0,1]$, whose $\frac{1}{2}$-sub-level set is $\Ss$.
For $K=0$, $s_0$ is simply the binary indicator function.
Denoting $\TV_K = \TV(s_K)$ and $\CC_K = \CC(s_K)$, we observe that
\begin{eqnarray}
\CC_K & \le & (r+K) \cdot \TV_K \leq (r+K)  \frac{ \TV_0 \cdot \Vol\,B_r}{\Vol\,B_{r+K}} \nonumber\\
&=& \CC_0 \left(\frac{r}{r+K}\right)^2,\label{eq:shape-complex}
\end{eqnarray}
which for $K \gg r$ becomes $\CC_K \sim \mathcal{O}(K^{-2})$. Therefore, from the point of view of the complexity factor alone,
it is advantageous to increase $K$ without limits.

However, a large $K$ has a negative impact on the noise resilience of the symmetry detection algorithm.
To visualize this, assume that the shape is almost perfectly symmetric,
such that under a transformation $\bb{R} \in \OO(3)$, $s_0$ and $s_0 \bb{R}$ match except for on a small ball $B_\epsilon$ resulting from noise.
Therefore, using $s_0$, $\bb{R}$ has the distortion of $\delta_0 = \dis_s\,\bb{R} = \Vol\,B_\epsilon / \Vol\,B_r$. Increasing $K$ yields
\begin{eqnarray}
\delta_K &=& \frac{\Vol\,B_{\epsilon+K}}{\Vol\,B_{r+K}} = \left(\frac{\epsilon+K}{r+K}\right)^3,
\end{eqnarray}
which for $K \gg r$ becomes $\delta_K \sim 1$, amplifying the noise to unreasonable proportions.

%%%%%%%%%%%%%%%%%%%% 	SPHERES  %%%%%%%%%%%%%%%%%%%%%
\begin{figure*} [t]
\begin{center}%\vspace{-3pt}
\addtolength{\tabcolsep}{-3pt}
    \begin{tabular}{c c c c c c}%\vspace{-12pt}
      \hspace{-9pt}
      \includegraphics[width = 0.12\textheight]{./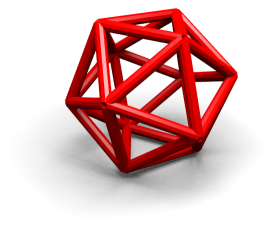} &
      \includegraphics[width = 0.09\textheight]{./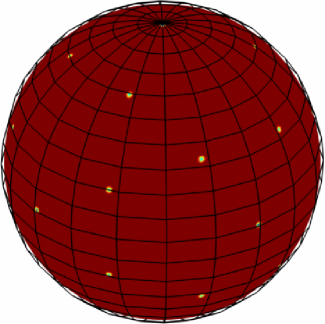} &
      \includegraphics[width = 0.09\textheight]{./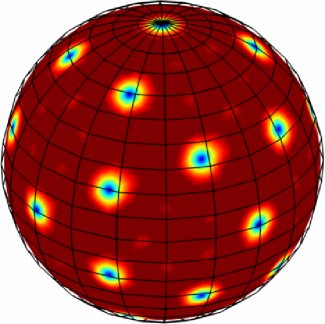} &
      \includegraphics[width = 0.09\textheight]{./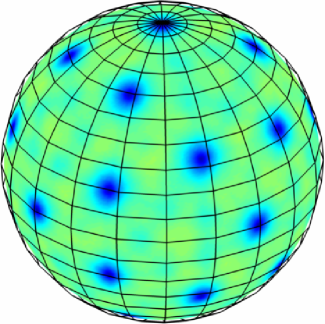} &
      \includegraphics[width = 0.09\textheight]{./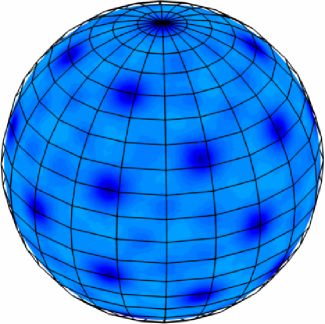} &
      \includegraphics[width = 0.18\textheight]{./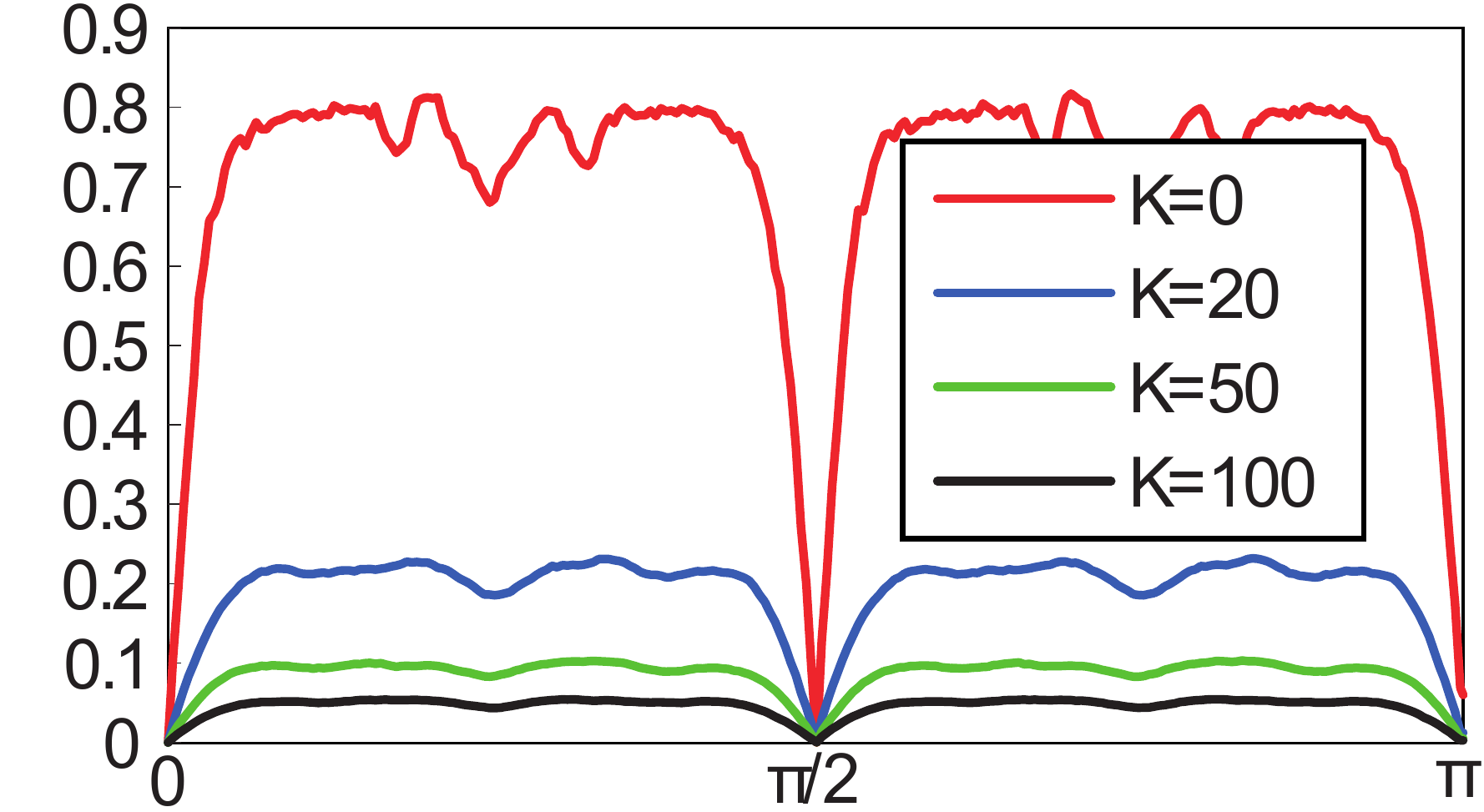}\\
      \hspace{-9pt}
      \includegraphics[width = 0.12\textheight]{./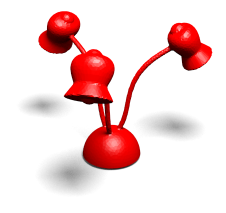} &
      \includegraphics[width = 0.09\textheight]{./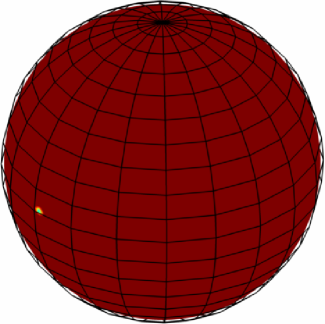} &
      \includegraphics[width = 0.09\textheight]{./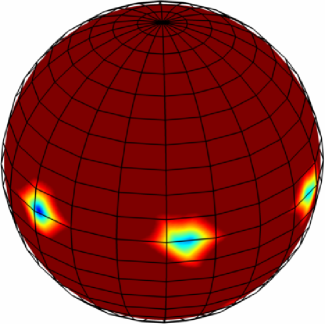} &
      \includegraphics[width = 0.09\textheight]{./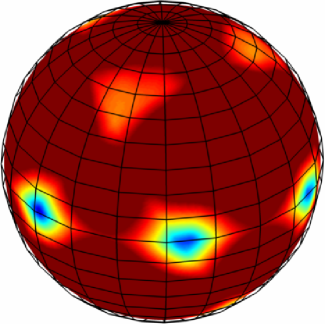} &
      \includegraphics[width = 0.09\textheight]{./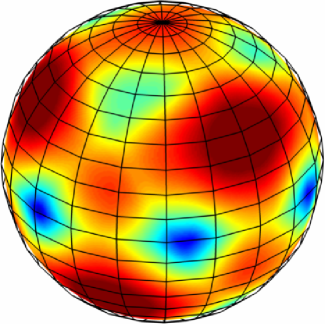} &
      \includegraphics[width = 0.18\textheight]{./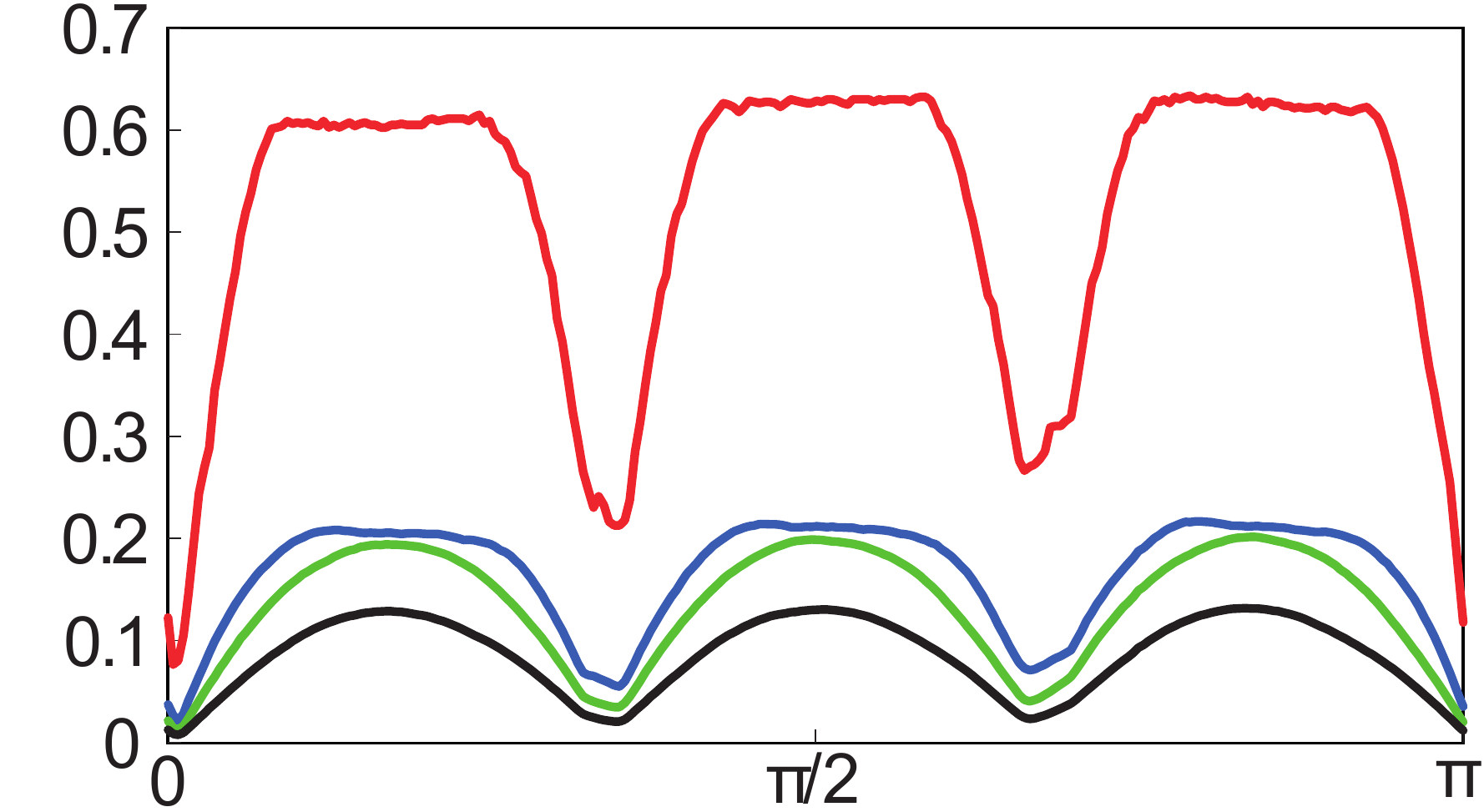}\\
      \hspace{-9pt}
      \includegraphics[width = 0.12\textheight]{./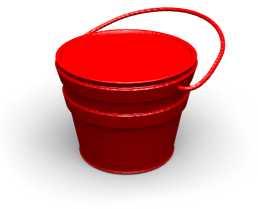} &
      \includegraphics[width = 0.09\textheight]{./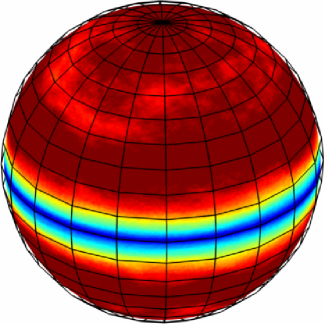} &
      \includegraphics[width = 0.09\textheight]{./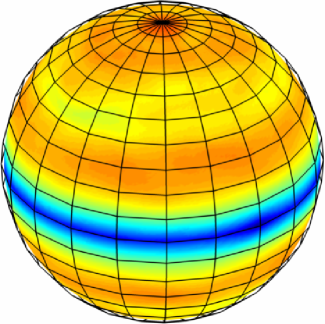} &
      \includegraphics[width = 0.09\textheight]{./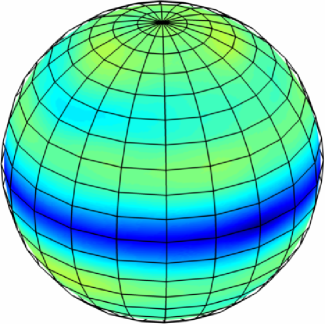} &
      \includegraphics[width = 0.09\textheight]{./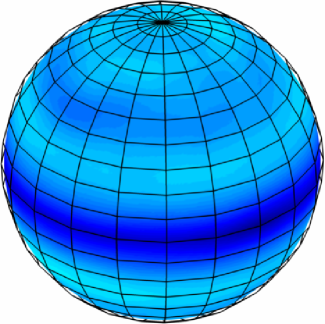} &
      \includegraphics[width = 0.18\textheight]{./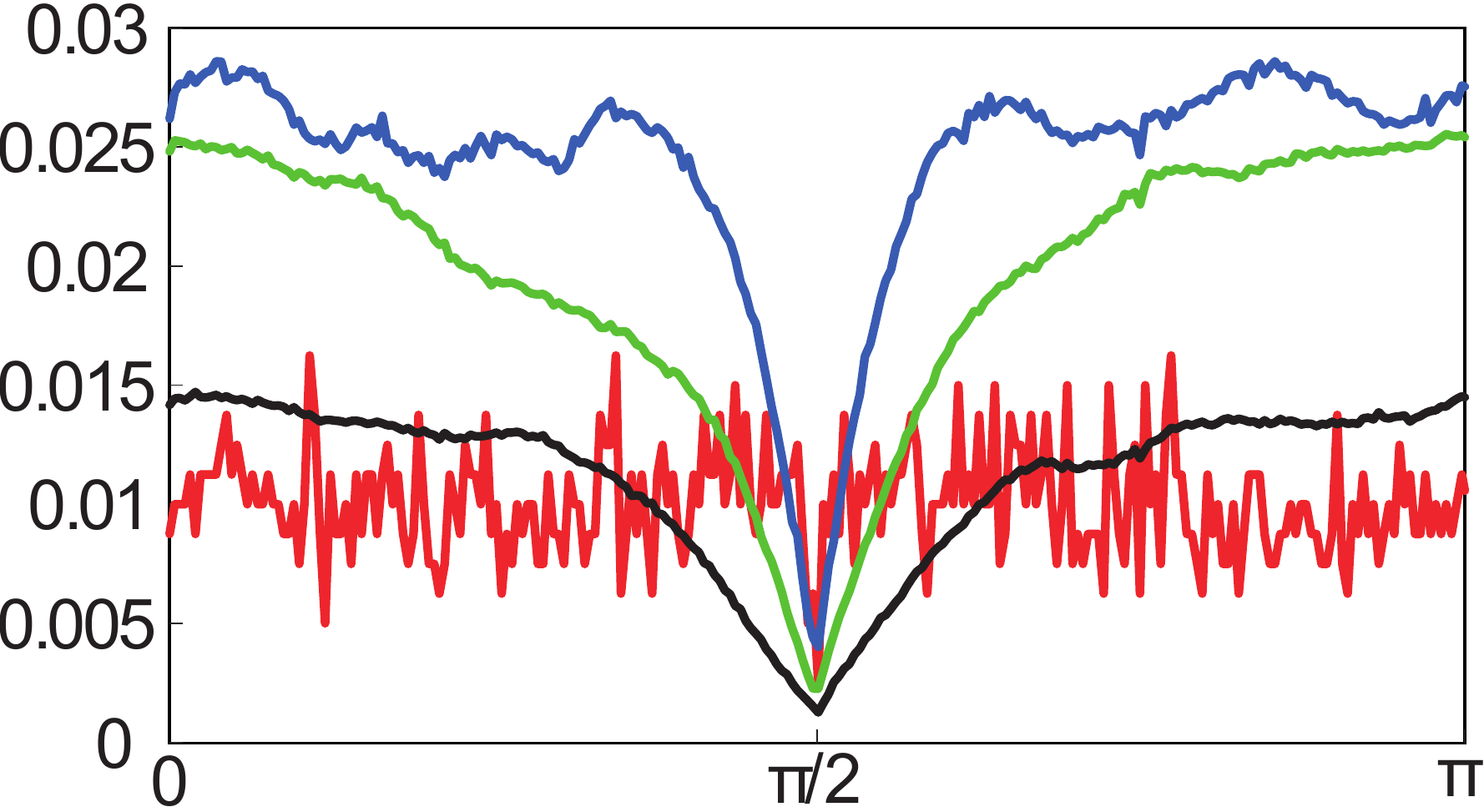}\\
      \hspace{-9pt}
(a) shape & (b) K=0 (binary) & (c) K=20\% & (d) K=50\% & (e) K=100\% & (f) 1D profile along equator \\
        \end{tabular}
        \vspace{-2pt}
        \caption{\textbf{Distortion maps for different truncation levels.} We consider only reflection symmetries in this illustration. Each row shows one example in the following format: (a) the shape (b)-(e): distortion levels of the function $s_K$ for different truncation values $K$. Color-coding ranges from 0 (blue) to a clipped value of 0.2 (red) at each location and represents the respective distortion of the planar reflection symmetry, whose normal passes through the point. (f) a section of the distortion along the equator of the sphere. See text for interpretation.}
        \label{fig.LK.Features}
    \end{center}\vspace{-16pt}
\end{figure*}

%% Alex's new addition:

The same effect also decreased the sensitivity to fine features. Suppose two approximate symmetries $R_1$ and $R_2$ maintain the shape invariant except for on small balls $B_{\epsilon}$ and $B_{2\epsilon}$, respectively. The corresponding distortions when using $s_0$ are\vspace{-12pt}

\begin{equation*}
  \delta^1=\frac{\Vol\,B_{\epsilon}}{\Vol\,B_{r}} = \left(\frac{\epsilon}{r}\right)^3\; , \quad
  \delta^2=\frac{\Vol\,B_{2\epsilon}}{\Vol\,B_{r}} = \left(\frac{2\epsilon}{r}\right)^3=8\delta^1 \; .
\end{equation*}

That is, $R_2$ is an order of magnitude "worse" than $R_1$, which is typically an easily detectable situation.

When increasing $K$ ,the discriminativity, viewed as the ratio
\begin{equation}
  \frac{\delta^2_K}{\delta^1_K} = {{\left(\frac{2\epsilon+K}{r+K}\right)^3}}\mathlarger{\mathlarger{\mathlarger{/}}}{{\left(\frac{\epsilon+K}{r+K}\right)^3}} = \left(\frac{2\epsilon+K}{\epsilon+K}\right)^3
\end{equation}
approaches $1$ for $K>>\epsilon$, meaning that features of size $K$ are smoothed out.

\section{Experiments and applications}

The code used to generate the reported results can be downloaded from the project webpage \cite{ShapeSymmetryWebpage}.
% our public SVN server \footnote{\url{https://vista.eng.tau.ac.il:8443/svn/main/pub/ProbablyApproximatelySymmetric} \\(username: ``guest'', blank password)}.

\paragraph*{Data-sets} The main data-set we work with is the COSEG data-set~\cite{wang2012active}, which includes (among other) 190 shapes belonging to eight categories, that were originally purported for the evaluation of segmentation algorithms. While the shapes were created using CAD tools, they are not completely synthetic in their nature. Specifically, while most of the shapes have at least one kind of symmetry, in the vast majority of the cases the symmetry is far from
being perfect, which makes its detection challenging. We first rasterize a randomly rotated
version of each shape into a cartesian voxelized volume, where the maximal dimension is taken to be 160 voxels. Since all COSEG shapes are vertically aligned, a random rotation disables the advantage of any specific sampling location. Then, we center the shape around its centroid and measure its support. Finally, we pad and crop the volume to a cube, with the side length twice the shape support radius $r$. This guarantees that the shape remains within the volume under arbitrary rotations and reflections. The final volume dimensions are around $200^3$.

We also created a small data-set of shapes that have complex symmetry groups. These include the icosahedron and the dodecahedron (See Figure~\ref{fig.runtimes2}\; for an illustration). Finding \emph{all} the symmetries of such shapes is computationally challenging.
The third type of data we use is a volumetric scalar MRI image taken from~\cite{Cocosco97brainweb:online}.

\paragraph*{Symmetries and their notations} We seek to find approximate symmetries of different kinds. The first kind are \emph{planar reflections}, around a generally oriented plane which passes through the centroid. We denote such a symmetry by \textbf{REFL} and visualize it as a transparent plane. The second kind are $t$-fold rotations around an axis that passes through the centroid (where we search for $t$ between $2$ and $20$). Such a symmetry (or a set of symmetries) includes rotation symmetries of the set of angles $\{2\pi i/t\}$ for $i = 1,...,t-1$. We denote such a symmetry by \textbf{t-fold-ROT} and visualize its rotation axis in red. The third kind, axial-symmetries, are fully-continuous rotation symmetries around some axis. We denote these by \textbf{CONT} and visualize them using a magenta colored axis.

\paragraph*{Algorithm settings and implementation details} We represent a shape $s_K$ by applying a TSDF, where the truncation parameter $K$ is chosen adaptively such that the total variation $\TV$ of the shape is approximately $3/r$, as detailed in Section~\ref{sec:auto_K}. When running the main algorithm (Algorithm~\ref{algo:all-symm}), we aim for high precision, which translates into invoking the (single symmetry detection) Algorithm~\ref{algo:single-symm} with low values of the precision parameter $\delta$. For efficiency, we run Algorithm~\ref{algo:single-symm} in a branch-and-bound manner, which begins with an initial (coarse) net defined by $\delta=0.25$ and iteratively increases resolution only in "promising" regions of the transformation space, finally reaching the desired resolution. Note that this can be done as in~\cite{FastMatchPaper}, based on our net construction, while keeping the theoretical guarantees. In addition, after the detection of each symmetry, we carve out all transformations whose symmetry axis is less than $10^{\circ}$ from that of the detected axis and then repeat the search for additional symmetries. All experiments were run on a 2.70GHz machine, with 8GB RAM.
Our timings throughout (excluding Table~\ref{tbl-Kazhdan-all} and \ref{tbl-Kazhdan}) do not include a $0.8$ seconds pre-processing time for the TSDF computation, per shape.
% examples

\subsection{The influence of truncation} \label{sec:truncation-influence}

%%%%%%%%%%%%%%%%%%%%%%%%%%  K vs AV  %%%%%%%%%%%%%%%%%%%%%
\begin{figure} [t!]
%\vspace{-19pt}
  \begin{center}
    \includegraphics[width=1\linewidth]{./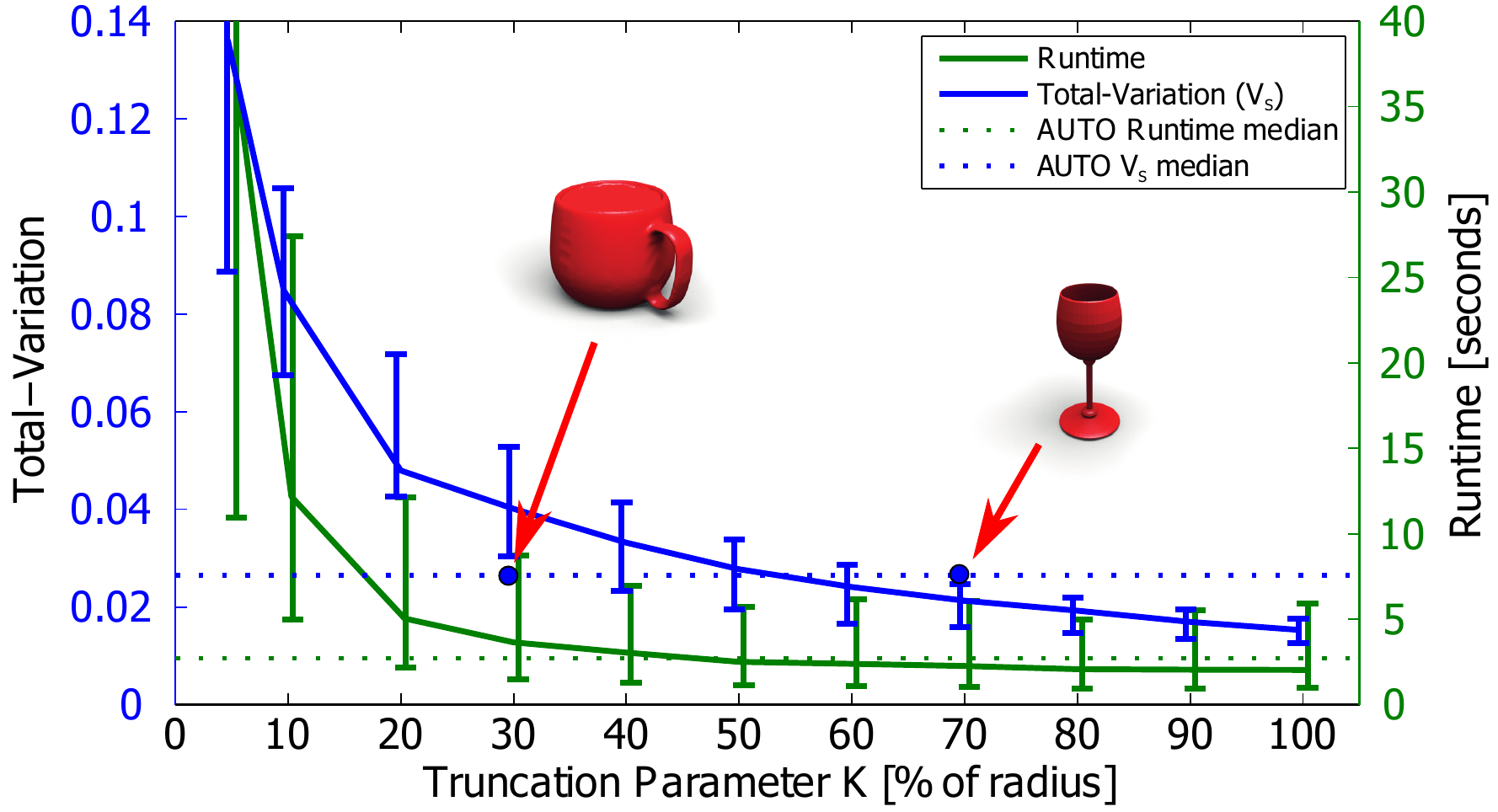}
\caption{\textbf{The empirical dependence of $\TV$ and run time on $K$.}
Large $K$ leads to lower total-variation $\TV$.
Plotted are the median of the total-variation (and of the runtime in seconds),
over the entire COSEG data-set.
Error bars are $10^{th}$ and $90^{th}$ percentiles.
See Section~\ref{sec:truncation-influence} for details.
}\vspace{-4pt}
\label{fig.KvsAV}
  \end{center}
\end{figure}

Figure~\ref{fig.LK.Features} shows intermediate results of running our algorithm when a shape is represented with the binary indicator function, $s_0$, as well as with $s_K$ with various truncation values $K$, on a variety of shapes.
Each row depicts a different shape, followed by color-maps of the different distortion levels on a hemisphere (The hemisphere has the same orientation of the shape). For simplicity we focus only on planar reflective symmetries. The color-coding at each location on the hemisphere represents the distortion of the planar reflection symmetry, whose normal passes through the point. Each column of (b)-(e) shows such a sphere, for $K$ taken to be $0\%$, $20\%$, $50\%$, and $100\%$ of the shape radius. The last column (f) shows a 1D profile of the map around the equator of the sphere.
As can be seen, when $K=0$ (the binary indicator case) the distortion is relatively volatile. Increasing $K$ decreases the total variation (and shape complexity factor, as in Eq. \eqref{eq:shape-complex}) and therefore makes the distortion map much smoother (see Proposition~\ref{prop:bound}). As a result, the sampling rate required by the algorithm can be decreased.
Notice that in the case of the lamp (second row) the distortion of the binary indicator function (i.e. $K=0$) is high even around approximate symmetries, because the shape is not perfectly symmetric. An increase in $K$ is necessary is such cases (see discussion in Section~\ref{sec.largescle}).
Finally, the binary indicator function is not affected much by the handle of the bucket (last row) because this is a very thin structure. The detection of the exact symmetry would require extremely fine sampling, while increasing $K$ increases the sensitivity to fine features.

%%%%%%%%%%%%%%%%%%%% 	COSEG EXAMPLES 1 %%%%%%%%%%%%%%%%%%%%%
\begin{figure}[t]
%\vspace{-3pt}
\begin{center}
\addtolength{\tabcolsep}{-5pt}
    \begin{tabular}{c c c c c}%\vspace{-12pt}
      \rotatebox{90}{\textbf{\small{\emph{'Chairs 102'}}}} \hspace{3pt} &
      \includegraphics[width = 0.105\textwidth]{./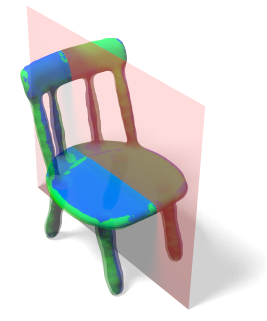} &
      \includegraphics[width = 0.105\textwidth]{./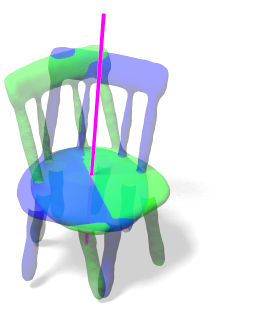} &
      \includegraphics[width = 0.105\textwidth]{./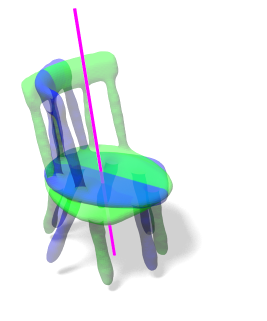} &
      \includegraphics[width = 0.105\textwidth]{./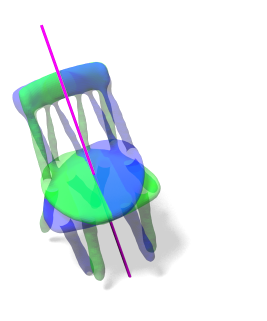}\vspace{-4pt}\\
     & \textbf{\tiny{REFL (0.014)}}&\textbf{\tiny{CONT (0.21)}}&\textbf{\tiny{CONT (0.2)}}&\textbf{\tiny{CONT (0.2)}}\\
      \rotatebox{90}{\textbf{\small{\emph{'Irons 103'}}}} \hspace{3pt} &
      \includegraphics[width = 0.105\textwidth]{./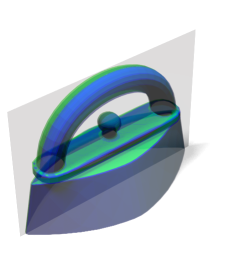} &
      \includegraphics[width = 0.105\textwidth]{./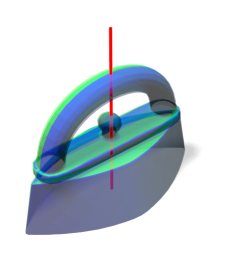} &
      \includegraphics[width = 0.105\textwidth]{./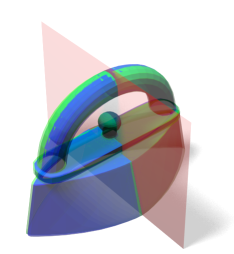} &
      \includegraphics[width = 0.105\textwidth]{./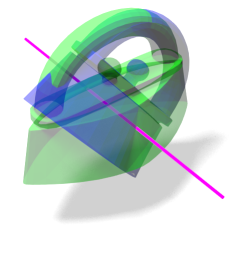}\vspace{-4pt}\\
     & \textbf{\tiny{REFL (0.012)}}&\textbf{\tiny{2-ROT (0.015)}}&\textbf{\tiny{REFL (0.018)}}&\textbf{\tiny{CONT (0.17)}}\\
      \rotatebox{90}{\textbf{\small{\emph{'Vases 826'}}}} \hspace{3pt} &
      \includegraphics[width = 0.105\textwidth]{./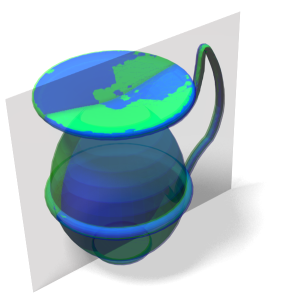}&
      \includegraphics[width = 0.105\textwidth]{./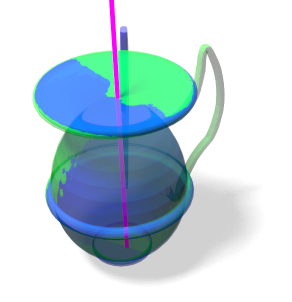} &
      \includegraphics[width = 0.105\textwidth]{./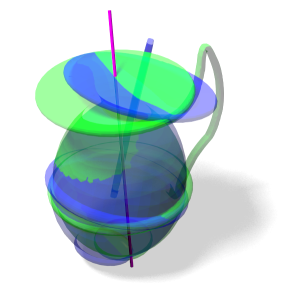} &
      \includegraphics[width = 0.105\textwidth]{./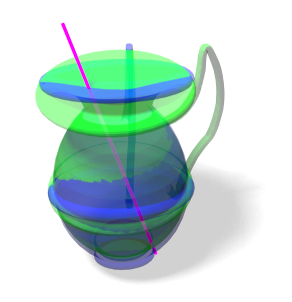}\vspace{-4pt}\\
    & \textbf{\tiny{REFL (0.001)}}&\textbf{\tiny{CONT (0.03)}}&\textbf{\tiny{CONT (0.086)}}&\textbf{\tiny{CONT (0.087)}}\\
       \end{tabular}
        \vspace{2pt}
        \caption{\textbf{Four best symmetries of COSEG shapes.} Three typical examples from the data set with different numbers of symmetries of different quality. The original shape is shown in green and the shape mapped to by the detected transformation is shown in blue. The chair shape has one clear reflection; three other transformations have higher distortion (reported in parentheses). The iron has three good symmetries (two reflections and a 2-fold rotation), while the vase has one nearly perfect reflection symmetry, and another rotation symmetry that is inexact because of the handle.}
        \label{fig.coseg.examples}
    \end{center}\vspace{-18pt}
\end{figure}
%%%%%%%%%%%%%%%%%%%%%%%%%%%%%%%%%%%%%%%%%%%%%%%%%%%%%%%%%%%%%%%%%%%%

\subsection{Automatic selection of the truncation parameter $K$} \label{sec:auto_K}

It is desirable to make the runtime of Algorithm~\ref{algo:single-symm} depend only on the precision parameter $\delta$, and not on the properties of the shape itself.
Recall that the number of samples (net size) depends on the total variation $\TV$, which in turn depends on the shape representation.
Therefore, we would like to automatically choose a value of $K$ for each shape, that will produce $\TV$ proportional to $1/r$. This will make the shape complexity factor $\CC$ relatively constant (since $\CC = r\cdot \TV$) and therefore ensures constant runtime for a wide range of values of $\delta$.
To this end, we measured the empirical dependence of the total variation on the truncation level $K$, over the entire COSEG data-set.
Figure~\ref{fig.KvsAV} shows the median, 10th and 90th percentiles of $\TV$ (in blue) and runtimes (in green, for a single detection per shape). Observe that increasing $K$ reduces both runtime and the total variation of the shape, in accordance with the bound stated in Section~\ref{sec.shape.complex}.

The automatic choice of $K$ for an unseen shape, is determined by initially calculating its $\TV$ for an arbitrary value of $K$ and then by improving the choice using a binary search, according to the above empirical distribution. We empirically chose the goal of $\CC=3$, which implies $\TV=3/r$.
Figure~\ref{fig.KvsAV} also shows two specific shapes. For the shape `\emph{goblets 32}' (right) whose shape complexity $\CC$ is large, we use a rather high truncation of $K=0.7r$ in order to reduce its $\TV$ drastically to the order of $3/r$. On the other hand, the shape `\emph{vases 801}' (left), which is more solid and has a lower shape complexity requires a much lower truncation of $K=0.3r$. The described automatic calculation of $K$, which we term 'AUTO', was used throughout the following experiments.

We empirically evaluated the influence of the truncation level $K$ on the discriminativity of the detection algorithm, as part of the large-scale experiment in the following section.

\subsection{Large-scale evaluation}\label{sec.largescle}

%%%%%%%%%%%%%%%%%%%%%%%%%%  DISTRIBUTIONS  %%%%%%%%%%%%%%%%%%%%%

\begin{figure}[t]%{r}{0.55\textwidth} %[h]
%\vspace{-6pt}
  \begin{center}
    \includegraphics[width=0.95\linewidth]{./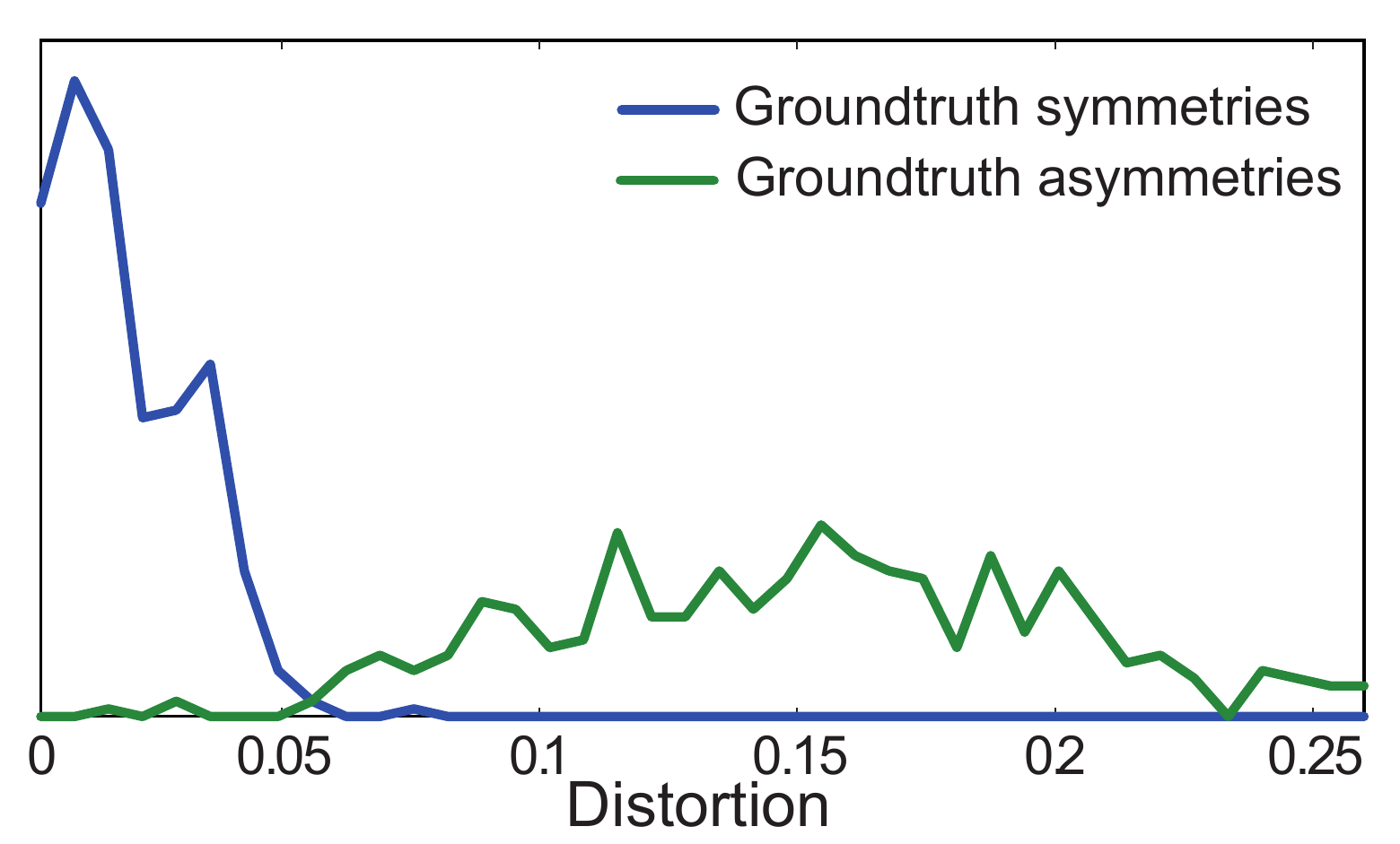} %\vspace{-16pt}
\vspace{-4pt}
\caption{\textbf{Distortion distribution of manually labeled results.} We manually labeled as "approximate symmetry" or "non-symmetry" the best four detections of the 190 COSEG shapes, and plotted the distortion distributions over these two sample sets. It turns out that these sets can be separated nicely by thresholding the distortion. This empirically established threshold is used as the stopping criterion in subsequent experiments.}
\label{fig.distributions}
  \end{center}\vspace{-18pt}
\end{figure}
%%%%%%%%%%%%%%%%%%%%%%%%%%%%%%%%%%%%%%%%%%%%%%%%%%%%%%%%%%%%%%%%

After establishing the accuracy and runtime complexity of our algorithm, we tested it on the complete COSEG dataset. To the best of our knowledge, we are the first to report symmetry detection results on a data-set of this scale. At the first stage, we ran the algorithm to detect the four best symmetries per shape giving a total of $760=190\cdot4$ symmetries. We did so (even though each shape might have less or more than 4 approximate symmetries) in order to investigate the option of automatically detecting when the returned symmetry is indeed an approximate symmetry. See Figure~\ref{fig.coseg.examples} for the four first detected symmetries (along with their distortion levels) for three such example shapes. As an example, the 'Chairs 102' shape has only one approximate symmetry, a planar reflection, which we detect first and which has low distortion of 0.014. The three remaining detected symmetries, which can not be considered approximate symmetries, have distortions levels of over 0.2.

We manually labeled all the `potential' symmetries returned by our algorithm as either "approximate symmetry" or "non-symmetry" and plotted a histogram of the respective distortions. As can be seen in Figure~\ref{fig.distributions}, our distortion measure is fairly invariant to the shape and a global threshold of $\dis=0.05$  can be used to determine if the detected symmetry is indeed a meaningful symmetry. We then ran the algorithm again, with the stopping criterion defined by this threshold. The algorithm found a total of 463 symmetries in the 190 shapes. Please refer to the supplementary material (availiable also on \cite{ShapeSymmetryWebpage}) for the complete set of detections. In Figure~\ref{fig.examples.good}, we show the approximate symmetries that were detected for eight representative shapes. Notice, that while some of the shapes have almost perfect symmetries (camel, lamp, candelabras), some others have (possibly, in addition to a perfect symmetry) some symmetries that are only approximate (e.g.  cup and guitar). In the latter cases, our visualization shows how well each part of the shape undergoes the symmetry (see cup handle and guitar neck).
%

%%%%%%%%%%%%%%%%%%%% 	COSEG GOOD EXAMPLES %%%%%%%%%%%%%%%%%%%%%
\begin{figure}[t]
%\vspace{-2pt}
\begin{center}
\addtolength{\tabcolsep}{-5pt}
    \begin{tabular}{c c c c}%\vspace{-12pt}
\includegraphics[height =  0.11\textwidth]{./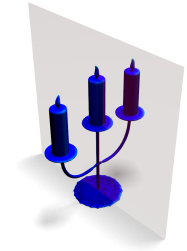}&
\includegraphics[height =  0.11\textwidth]{./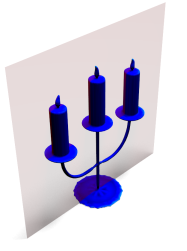}&
\includegraphics[height =  0.11\textwidth]{./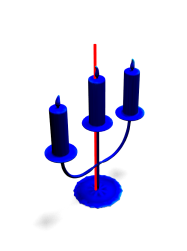}&
\includegraphics[height =  0.11\textwidth]{./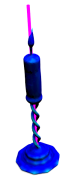}\vspace{-2pt}\\
\textbf{\footnotesize{\emph{ Candelab. 27}}}&\textbf{\footnotesize{\emph{ Candelab. 27}}}&\textbf{\footnotesize{\emph{ Candelab. 27}}}&\textbf{\footnotesize{\emph{ Candelab. 28}'}}\\
\textbf{\tiny{REFL (0.004)}}&\textbf{\tiny{REFL (0.010)}}&\textbf{\tiny{2-ROT (0.012)}}&\textbf{\tiny{CONT (0.011)}}\vspace{1pt}\\
\includegraphics[height =  0.11\textwidth]{./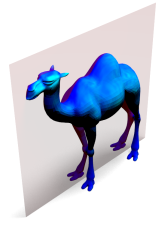}&
\includegraphics[height =  0.11\textwidth]{./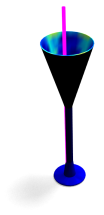}&
\includegraphics[height =  0.11\textwidth]{./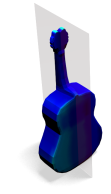}&
\includegraphics[height =  0.11\textwidth]{./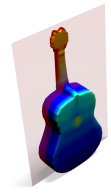}\vspace{-2pt}\\
\textbf{\footnotesize{\emph{ Fourleg 393}}}&\textbf{\footnotesize{\emph{ Goblets 12}}}&\textbf{\footnotesize{\emph{ Guitars 427}}}&\textbf{\footnotesize{\emph{ Guitars 427}}}\\
\textbf{\tiny{REFL (0.008)}}&\textbf{\tiny{CONT (0.010)}}&\textbf{\tiny{REFL (0.011)}}&\textbf{\tiny{REFL (0.024)}}\vspace{1pt}\\
\includegraphics[height =  0.11\textwidth]{./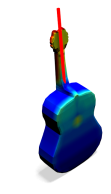}&
\includegraphics[height =  0.11\textwidth]{./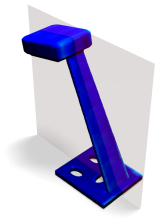}&
\includegraphics[height =  0.11\textwidth]{./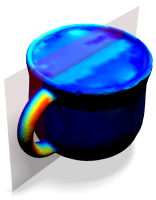}&
\includegraphics[height =  0.11\textwidth]{./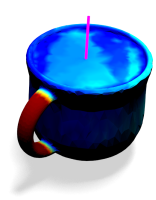}\vspace{-2pt}\\
\textbf{\footnotesize{\emph{ Guitars 427}}}&\textbf{\footnotesize{\emph{ Lamps 18}'}}&\textbf{\footnotesize{\emph{ Vases 817}'}}&\textbf{\footnotesize{\emph{ Vases 817}'}}\\
\textbf{\tiny{2-ROT (0.028)}}&\textbf{\tiny{REFL (0.008)}}&\textbf{\tiny{REFL (0.007)}}&\textbf{\tiny{CONT (0.023)}}\vspace{1pt}\\
       \end{tabular}
        \vspace{-2pt}
        \caption{\label{fig.examples.good}
        \textbf{Detected approximate symmetries on representative COSEG shapes.} Several examples - 8 shapes and the 12 symmetries we detected for them, using the threshold from Figure~\ref{fig.distributions} as the largest admissible distortion. See the supplementary material for all 463 symmetries detected on the entire dataset.}
    \end{center}
    \vspace{-18pt}
\end{figure}
%%%%%%%%%%%%%%%%%%%%%%%%%%%%%%%%%%%%%%%%%%%%%%%%%%%%%%%%%%%%%%%%%%%%

%%%%%%%%%%%%%%%%%%%%%%%%%%  RUNIMES - coseg  %%%%%%%%%%%%%%%%%%%%%%%%
\begin{figure}[b]%{r}{0.55\textwidth} %[h]
\vspace{-6pt}
  \begin{center}
    \includegraphics[width=0.9\linewidth]{./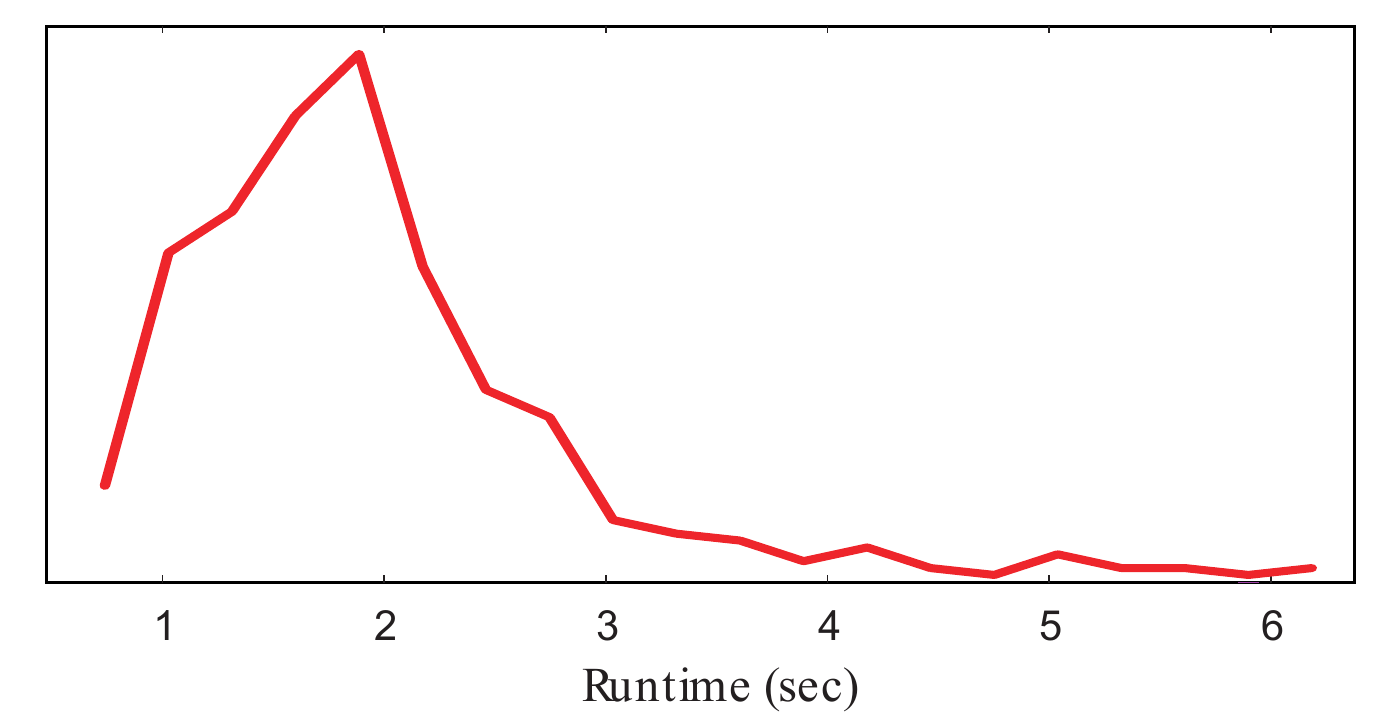} \vspace{-8pt}
\caption{\textbf{Distribution of runtime per detected symmetry}, on the 463 detected symmetries in the COSEG data-set.}
\label{fig.runtimes1}
  \end{center}\vspace{-15pt}
\end{figure}
%%%%%%%%%%%%%%%%%%%%%%%%%%%%%%%%%%%%%%%%%%%%%%%%%%%%%%%%%%%%%%%%%%%%%%

\noindent\textbf{Discriminativity analysis}\quad
The use of TSDF representations significantly improves the runtime complexity of the algorithm, as was shown in Figure~\ref{fig.KvsAV}. We now turn to test the qualitative consequences of manipulating the truncation parameter $K$ (as well as of the automatic choice of $K$).
While the large-scale experiment was done using the automatic selection of $K$ (described in Section~\ref{sec:auto_K}), here we experiment also using a fixed set of truncation levels $K$. The results are summarized in Figure~\ref{fig.nDetvsK}, where we count for each representation the number of detected approximate-symmetries as well as the number of false detections. As expected, the increase in $K$ comes at a certain loss of discriminativity, as the ratio between true and false detections slightly deteriorates. The increase in the number of detected approximate symmetries, up to a certain level of $K$, is due to the rejection of approximate symmetries by the less smooth representations (i.e. lower $K$'s). Most noticeably, our automatic selection mode outperforms any fixed selection of $K$ in terms of the number of true detections and the ratio of false detections as well as in terms of runtime (see Figure~\ref{fig.KvsAV}).

\subsection{Comparison with Kazhdan {\em et al.}\cite{kazhdan2004symmetry}}
\label{sec:kazhdan-comparison}

As mentioned in Section~\ref{sec:prior-work}, the work of Kazhdan {\em et al} \cite{kazhdan2004symmetry} bares some resemblance to the present work, mainly because it also directly evaluates many transformations using the measure of \cite{zabrodsky1995symmetry} and therefore some of our bounds may apply to it. In spite of the similarities, there are some major differences which make the comparison difficult. The methods mainly differ in the choice of the set of transformations, as well as in the way by which they are evaluated.

% Kazhdan
The FFT-like approach in \cite{kazhdan2004symmetry} requires a regular $n\times n$ grid sampling of the latitude-longitude space of transformations, where $n$ is preferably a complete power of $2$. For this setting, the complexity of \cite{kazhdan2004symmetry} is $O(n^4)$ , which is dominated by an auto-correlation computation of order $O(n^4)$, followed by a computation of order $O(kn^2)$ for detecting $k$-fold symmetries for each $k$.
In comparison, our method's complexity, $O(n^3\cdot\epsilon^{-2}\log\frac{1}{p})$, scales more gracefully with $n$.

We first evaluated both methods on the task of detecting the best symmetry in $O(3)$ (including reflections and rotations of up to 8 folds) for the shapes in the COSEG data-set and compared several grid-size configurations of \cite{kazhdan2004symmetry} against our method.
The publicly available implementation of \cite{kazhdan2004symmetry} does not allow access to the $O(n^4)$ auto-correlation result and requires to recalculate it for each $k$-fold detection and therefore, for fairness of comparison, we report the average time over all $k$-fold computations for \cite{kazhdan2004symmetry}. The results summarized in Table \ref{tbl-Kazhdan-all} show that our method reaches lower distortion values even when compared to a grid of $256^2$, which takes much longer to evaluate.
Note that all distortion results are calculated on the original shape representation, following \cite{zabrodsky1995symmetry}.

% Rotations table

\begin{table}[h!]
  \centering
\addtolength{\tabcolsep}{-2pt}
\begin{tabular}{ crccc }
% headers
\parbox[c]{55pt}{algorithm} &
\parbox[r]{18pt}{grid size} &
\parbox[c]{30pt}{number of trans.} &
\parbox[c]{30pt}{distortion \cite{zabrodsky1995symmetry}} &
\parbox[c]{20pt}{runtime [sec]} \vspace{4pt}\\
 \hline
 \hline
 %contents
\multirow{4}{*}{\parbox[c]{60pt}{Kazhdan \\ \emph{et al.} \\ \cite{kazhdan2004symmetry}}}
    & $32^2$  & $1,024$   & $0.160$  & $\mathbf{0.17}$ \\
    & $64^2$  & $4,096$   & $0.087$  & $0.42$ \\
    & $128^2$ & $16,384$  & $0.056$  & $3.42$ \\
    & $256^2$ & $65,536$  & $0.044$  & $30.56$ \\
  \hline
{\parbox[c]{70pt}{Proposed method} }
    &  $-\;\;\;$ & $106,054$  & $\mathbf{0.040}$  & $2.62$ \\
  \hline
\end{tabular}\vspace{-1pt}
\caption{\label{tbl-Kazhdan-all}
\textbf{Best symmetry detection.}
The table summarizes performance of two algorithms for best symmetry detection on the COSEG data-set.
Presented are median values for: (i) number of evaluated transformations (ii) symmetry measure \cite{zabrodsky1995symmetry} and (iii) run-time.
}
\vspace{-8pt}
\end{table}

\begin{figure}[t]
  \begin{center}
    \includegraphics[width=1\linewidth]{./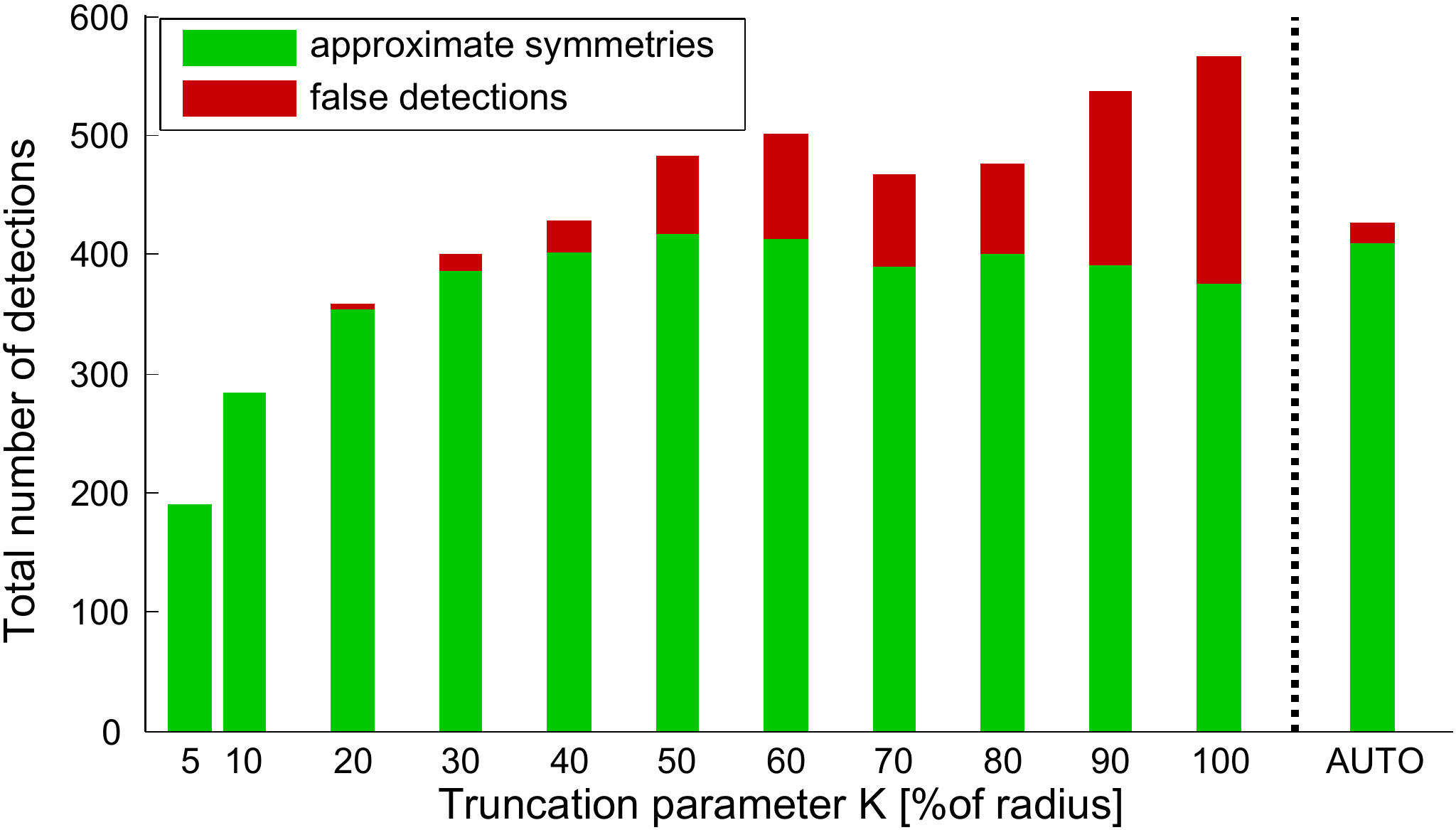}
\caption{\textbf{The empirical discriminativity vs. $K$.}
Low values of $K$ lead to
%(very) high run-times, and to
rejection of some approximate symmetries, while large ones reduce the discriminative power of our method. The automatic selection of $K$ leads to a favorable tradeoff.
See Section~\ref{sec.largescle} for details.
}\vspace{-14pt}
\label{fig.nDetvsK}
  \end{center}
\end{figure}

The sets of transformations used by both methods are quite similar when limited to reflection symmetries, therefore we were able to perform a more delicate comparison in this setting. We made several modifications to our method so it would be directly comparable to \cite{kazhdan2004symmetry}.
%First, since we wish to compare equivalent sizes of transformation samples, we do not determine the net density by the total-variation $\TV$ and precision $\delta$ (following Proposition~\ref{prop:sample-size}), but rather set it to produce net size equivalent to the grid sizes in the code of \cite{kazhdan2004symmetry}.
First, we set our net sizes to be equivalent to the grid sizes in the code of \cite{kazhdan2004symmetry}, rather than following Proposition~\ref{prop:sample-size} by using $\TV$ and $\delta$.
For the same reason, we disabled the branch-and-bound procedure, which allows reaching fine resolutions even with an initial coarse net. These modifications have a negative impact on our algorithm.
Note however that this comparison does not show the full effectiveness of \cite{kazhdan2004symmetry}, as the $O(n^4)$ auto-correlation preprocess (which is mandatory) is used only for the limited case of reflections.
% On the other hand, the $O(n^4)$ auto-correlation preprocess of \cite{kazhdan2004symmetry} is ``wasted'' when used only for reflections.

We ran our algorithm in two configurations: one with the original binary shape ($K=0$), and another with automatic selection of $K$ (AUTO), as can be seen in Table~\ref{tbl-Kazhdan}.
Note that our net covers the transformation space uniformly, while the sampling used in \cite{kazhdan2004symmetry} (a product of equidistant samples in the azimuth-elevation space) is denser around the 'poles' and less around the 'equator'. This explains the slightly better distortions we obtain, for each fixed grid size.
% Our faster runtimes for each fixed grid size are due to our fast-approximation (Section~\ref{sec.fast-evaluation}) of the symmetry measure.

\begin{table}[t]
  \centering
\begin{tabular}{ crccc }
% headers
\parbox[c]{40pt}{algorithm} &
\parbox[r]{20pt}{grid size} &
\parbox[c]{30pt}{number of trans.} &
\parbox[c]{40pt}{distortion \cite{zabrodsky1995symmetry}} &
\parbox[c]{20pt}{runtime [sec]} \vspace{4pt}\\
 \hline
 \hline
 %contents
\multirow{4}{*}{\parbox[c]{40pt}{Kazhdan \\ \emph{et al.} \\ \cite{kazhdan2004symmetry}}}
    & $32^2$  & $1,024$   & $0.162$  & $\mathbf{0.19}$ \\
    & $64^2$  & $4,096$   & $0.085$  & $0.32$ \\
    & $128^2$ & $16,384$  & $0.057$  & $2.64$ \\
    & $256^2$ & $65,536$  & $0.044$  & $27.65$ \\
  \hline
\multirow{4}{*}{\parbox[c]{40pt}{Proposed \\ method \\ ($K=0$)} }
    & $ \sim 32^2$  & $1,031$   & $\mathbf{0.076}$  & $\mathbf{0.19}$ \\
    & $ \sim 64^2$  & $4,094$   & $0.055$  & $\mathbf{0.22}$ \\
    & $ \sim 128^2$ & $16,370$  & $0.045$  & $\mathbf{0.29}$ \\
    & $ \sim 256^2$ & $65,301$  & $0.044$  & $\mathbf{0.50}$ \\
  \hline
\multirow{4}{*}{\parbox[c]{40pt}{Proposed \\ method \\  (AUTO $K$)} }
    & $ \sim 32^2$  & $1,031$   & $\textbf{0.076}$  & $0.96^*$ \\
    & $ \sim 64^2$  & $4,094$   & $\mathbf{0.052}$  & $1.02^*$ \\
    & $ \sim 128^2$ & $16,370$  & $\mathbf{0.041}$  & $1.12^*$ \\
    & $ \sim 256^2$ & $65,301$  & $\mathbf{0.035}$  & $1.40^*$ \\
  \hline
\end{tabular}
\caption{\label{tbl-Kazhdan}
\textbf{Best reflection symmetry detection.}
The table summarizes performance of three algorithms for best reflection symmetry detection on the COSEG data-set.
See Table \ref{tbl-Kazhdan-all} and text for more details.
\quad (*) Runtimes for 'AUTO $K$' include a pre-processing of $\sim\;0.8$ seconds for the TSDF calculation.}
\vspace{-7pt}
\end{table}

A benefit of using the TSDF representation is evident when comparing the distortions achieved in the $K=0$ and the AUTO runs. As stated in Section~\ref{sec.shape.complex}, applying the TSDF lowers the variance of the shape representation, allowing use of tighter bounds than the one mentioned in proposition~\ref{prop:proptest} (which only assumes the individual summands are bounded and does not take advantage of their variance). As a result, better estimation of the distortion is achieved. %Additionally, our method performs favorably in terms of the increase in runtime as a function of the resolution. %

\subsection{Complex symmetry groups}

We tested our algorithm on shapes with known complex symmetries. Cumulative runtimes for finding the entire set of symmetries are reported in Figure~\ref{fig.runtimes2}. In both cases (Dodecahedron and Icosahedron), the algorithm correctly detected exactly all 46 symmetries. The average symmetry detection time of the algorithm {\em decreases} as the number of symmetries of a shape increases because after each symmetry is detected we carve out its neighborhood.

Martinet \emph{et al.}~\cite{martinet2006accurate} evaluate their algorithm on the icosahedron as well.
The runtime we report here (45 seconds) is more than an order of magnitude smaller compared to the 50 minutes for detecting all symmetries of the icosahedron, reported in \cite{martinet2006accurate}, when treating the icosahedron as a single complex shape. The authors of \cite{martinet2006accurate} also report a runtime of 1 minute and 57 seconds when applying their 'constructive' method, which assumes that the icosahedron is given as a segmented set of 30 tiles.

%%%%%%%%%%%%%%%%%%%%%%%%%%  RUNIMES - dodec  %%%%%%%%%%%%%%%%%%%%%

\begin{figure}[t]%{r}{0.55\textwidth} %[h]
%\vspace{-3pt}
  \begin{center}
    \includegraphics[width=1\linewidth]{./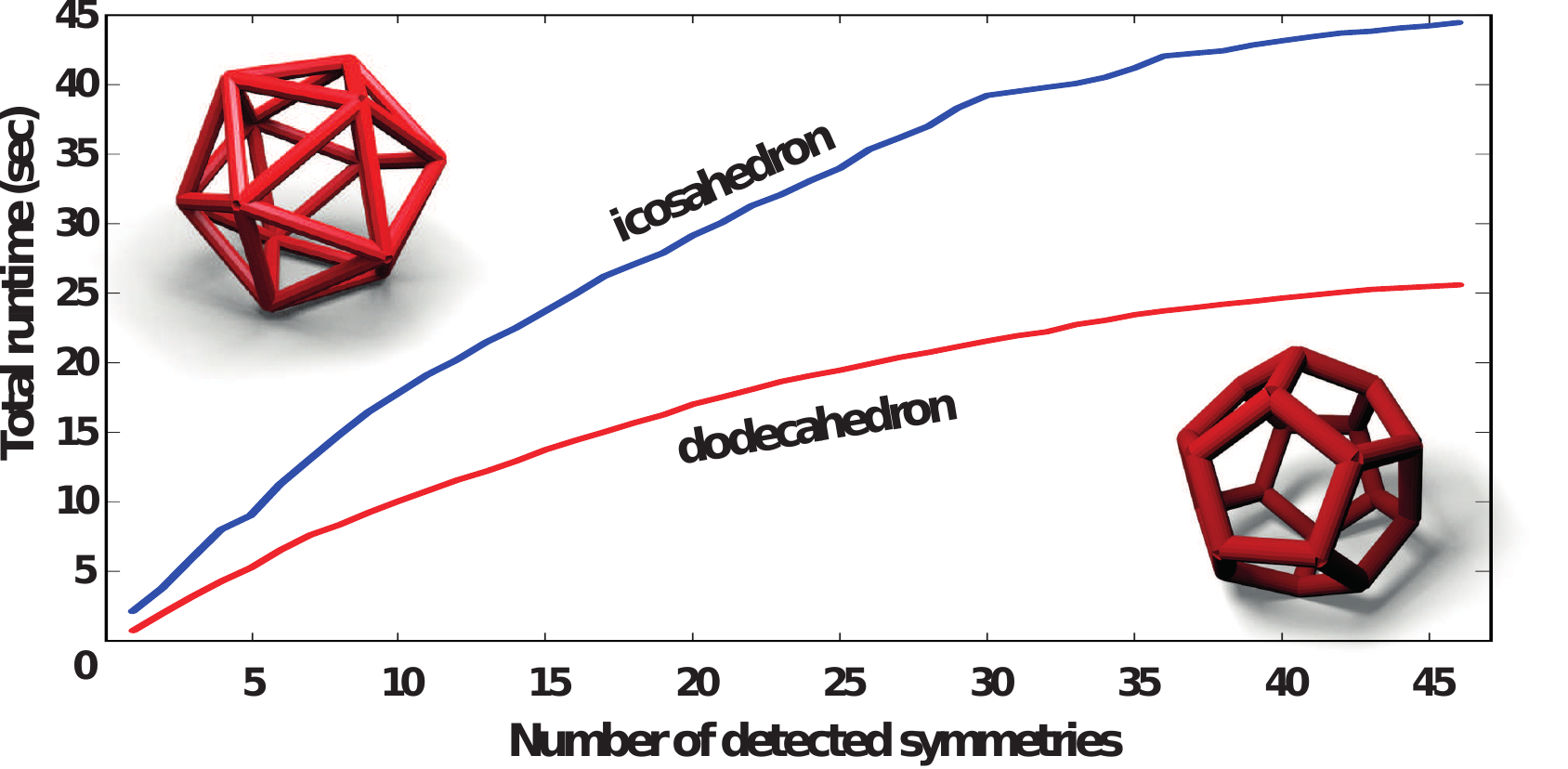} %\vspace{-12pt}
\caption{\textbf{Runtimes for the `icosahedron' and `dodecahedron'.}
The plot shows the runtime accumulating while detecting the symmetries.
The algorithm found exactly the 46 symmetries of these shapes: fifteen reflections, ten 3-fold rotations, six 5-fold rotations and fifteen 2-fold rotations.
Note that less effort is needed the further the algorithm progresses.
This is due to the fact that regions around the previously detected symmetries are being 'carved-out' from the search space. Our runtimes are competitive with those reported by Martinet {\em et al.}~\cite{martinet2006accurate}. See text for further details.}
\label{fig.runtimes2}
  \end{center}\vspace{-10pt}
\end{figure}

%%%%%%%%%%%%%%%%%%%%   FOURLEG   %%%%%%%%%%%%%%%%%%%%%
\begin{figure*}
\vspace{-3pt}
\begin{center}
\addtolength{\tabcolsep}{-0pt}
    \begin{tabular}{c c c c c }%\vspace{-12pt}
      \includegraphics[height = 0.15\textheight]{./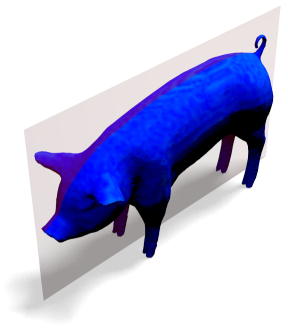} & % pig
      \includegraphics[height = 0.15\textheight]{./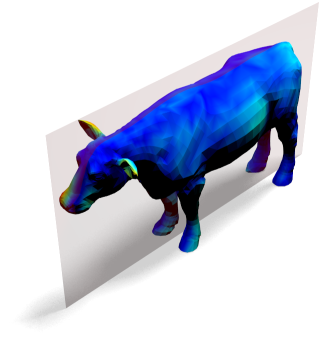} & % cow
      \includegraphics[height = 0.165\textheight]{./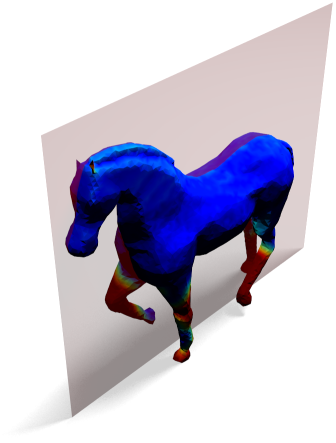} & % walking horse
      \includegraphics[height = 0.15\textheight]{./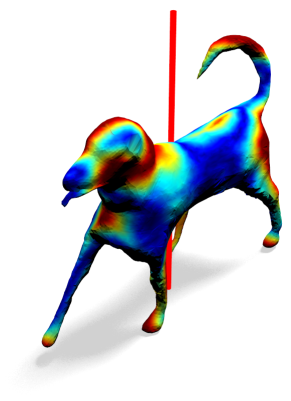} & % dog
      \includegraphics[height = 0.135\textheight]{./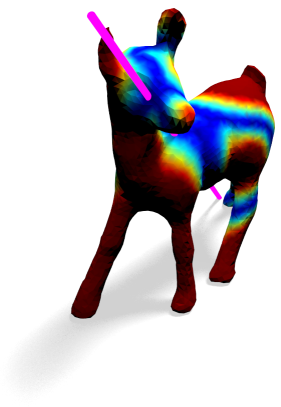} \vspace{-5pt}\\%\vspace{-6pt}\\ % lamb \vspace
      \small{REFL (0.0068)} & \small{REFL (0.0282)} & \small{REFL (0.0496)} & \small{2-ROT (0.1004)} & \small{CONT (0.1686)}\\
         \end{tabular}
        \vspace{-3pt}
        \caption{\textbf{Examples from the 'Fourlegs' category of COSEG \cite{wang2012active}, ordered by increasing distortion.} The single best detected symmetry is shown for each shape; symmetry type and distortion are reported below each image. The color of each point is the absolute difference between a point and its match in the transformed shape (increasing from blue to red). In the horse, for example, the body is symmetric but the legs are not.}
        \label{fig.COSEG.4leg}
    \end{center}
        \vspace{-20pt}
\end{figure*}
%%%%%%%%%%%%%%%%%%%%%%%%%%%%%%%%%%%%%%%%%%%%%%%%%%%%%%%%%%%%%%%%%%%%%%%%%%%%%%%%%

\subsection{Determining how symmetric a shape is}

Some applications only require to determine if a shape is symmetric or not. For example, in the case of quadrupeds, their shape will be symmetric if they are in the natural pose. To demonstrate this we took the `Fourlegs' class from the COSEG dataset \cite{wang2012active}, which includes 20 shapes of various quadrupeds. We ran our algorithm and found the best symmetry for each of the shapes. Figure~\ref{fig.COSEG.4leg} shows some of the shapes in increasing distortion order.

Several observations can be made. First, observe that the algorithm indeed found the most prominent symmetry in each case. The color code in Figure~\ref{fig.COSEG.4leg} encodes the difference between the corresponding points, measured as the absolute difference between a point on the original shape and its corresponding point on the transformed shape. As shown in the first example (pig), the shape is almost perfectly symmetric. In the next example (cow), the head is slightly tilted and indeed this is correctly detected by our algorithm. Note that this distortion does not affect the quality of the recovered plane of symmetry. The third example (horse) shows that the algorithm properly detects the symmetry plane despite the fact that the legs are not symmetric. The last two examples show shapes that are not symmetric and indeed the animals are not in their natural pose and the distortion of the best symmetry found by the algorithm is high.

\subsection{Symmetry in an MRI scalar volume}
So far we assumed that the input originates from solid 3D shapes. However, our method can handle general scalar volumes, which are common in 3D medical imaging. We used a simulated volume of a normal brain from~\cite{Cocosco97brainweb:online}, in the T1 MRI modality, slice thickness of $1$mm, $3\%$ noise and $20\%$ non-uniformity. The best symmetry detected by our method discovered the bilateral symmetry of the left and right brain hemispheres (see Figure \ref{fig.MRI}), although the model is far from being perfectly symmetric.

\section{Summary}

We presented a fast algorithm for global approximate 3D symmetry detection that is guaranteed to find all approximate symmetries of a volumetric representation of a shape within a user specified accuracy.
The algorithm is robust to noise and is fast in practice, taking about two seconds to detect a symmetry.

A key contribution of our work is a proof that the density of the net depends on the total variation of the shape.
Therefore, the best transformation on the net is within an approximation constant from the optimal transformation.
We further show the use of TSDF representations to control the shape total variation, and hence the sampling density.
The algorithm is further accelerated using sub-linear sampling that randomly examines only a small number of points, which makes the algorithm find symmetry with overwhelmingly high probability.

Several experiments asses the performance of the algorithm, including very complicated shapes with tens of symmetries.
Unlike previous work, we include an experiment on a large set of shapes and show that the method scales well.

The proposed algorithm can be modified and generalized in the following manners. First, it could handle richer transformation groups like Euclidean transformations $\mathrm{E}(3)$, enabling rigid registration of shapes. Second, it could be applied to a given part of a shape (as was done in \cite{FastMatchPaper}), and act as a component in a partial symmetry detector.

\bibliographystyle{plain}
%\bibliography{../CGF/ShapeSymmetry}

\end{document}